
\documentclass[onecolumn, conference]{IEEEtran}

\setlength{\topmargin}{-0.0in}
\setlength{\oddsidemargin}{0in}
\setlength{\evensidemargin}{0in}
\setlength{\textwidth}{6.5in}
\headheight=0.0in
\headsep=0.0in
\setlength{\textheight}{8.7in}

\usepackage{color}
\usepackage{times}
\usepackage{amssymb}
\usepackage{amsmath}
\usepackage{epsfig}
\usepackage{cite}
\usepackage{verbatim}
\usepackage{enumerate}
\usepackage{subfigure}
\usepackage{multicol}
\usepackage{pstricks}
\usepackage{pst-node}
\usepackage{amsmath}
\usepackage{graphicx}
\usepackage{setspace}
\usepackage{algpseudocode}
\usepackage{algorithm}
\usepackage{cases}

\newtheorem{property}{\bf Property}

\newtheorem{lemma}{\bf Lemma}

\newtheorem{definition}{\bf Definition}
\newtheorem{remark}{\bf Remark}

\newcommand{\define}{\stackrel{\triangle}{=}}
\newcommand{\fbeta}{\vec{\phi}}
\newcommand{\gbeta}{\phi}
\newcommand{\lbeta}{\vec{\chi}}
\newcommand{\LL}{\mathcal{L}}
\newcommand{\MM}{\mathcal{M}}

\begin{document}
\title{Optimal Repair of {MDS} Codes in Distributed Storage via Subspace Interference Alignment}
\author{\authorblockN{Viveck R. Cadambe, Syed A. Jafar}
\authorblockA{Electrical Engineering and Computer Science\\
University of California Irvine, \\
Irvine, California, 92697, USA\\
Email: \{vcadambe, syed\}@uci.edu} \and
\authorblockN{ Cheng Huang, Jin Li}
\authorblockA{Communications, Collaborations and Systems Group,\\
Microsoft Research, Redmond, WA, USA\\
Email: \{chengh, jinl\}@microsoft.com}}

{\let\thefootnote\relax\footnotetext{ {This paper will be published, in part, in the Proceedings of IEEE Symposium on Information Theory (ISIT) 2011 \cite{Cadambe_Huang_Li_Permutation_ISIT}.}}}

\maketitle
\begin{abstract}
It is well known that an $(n,k)$ code can be used to store information in a distributed storage system with $n$ nodes/disks. If the storage capacity of each node/disk is normalized to one unit, the code can be used to store $k$ units of information, where $n>k$.  If the code used is maximum distance separable (MDS), then the storage system can tolerate up to $(n-k)$ disk failures (erasures), since the original information can be reconstructed from \emph{any} $k$ surviving disks. The focus of this paper is the design of a systematic MDS code with the additional property that a single disk failure can be repaired with minimum \emph{repair bandwidth}, i.e., with the minimum possible amount of data to be downloaded for recovery of the failed disk. Previously, a lower bound of $\frac{n-1}{n-k}$ units has been established by Dimakis et. al, on the repair bandwidth for a single disk failure in an $(n,k)$ MDS code based storage system, where each of the $n$ disks store $1$ unit of data. Recently, the existence of asymptotic codes achieving this lower bound for arbitrary $(n,k)$ has been established by drawing connections to an asymptotic interference alignment scheme developed by Cadambe and Jafar for the interference channel. While the recent asymptotic constructions show the existence of codes achieving this lower bound in the limit of large code sizes, finite code constructions achieving this lower bound existed in previous literature only for the special (high-redundancy) scenario where $k \leq \max(n/2,3)$. The question of existence of finite codes for arbitrary values of $(n,k)$ achieving the lower bound on the repair bandwidth remained open. As a main contribution of this paper, we provide the first known construction of a finite code for {arbitrary} $(n,k)$, which can repair a single failed \emph{systematic} disk by downloading exactly $\frac{n-1}{n-k}$ units of data. The codes, which are optimally efficient in terms repair bandwidth are based on permutation matrices\footnote{The permutation marix based constructions of this paper have been discovered in parallel by Tamo et. al in \cite{Bruck_etal_ISIT}}. We also show that our code has a simple repair property which enables efficiency, not only in terms of the amount of repair bandwidth, but also in terms of the amount of data accessed on the disk. We also generalize our permutation matrix based constructions by developing a novel framework for repair-bandwidth-optimal MDS codes based on the idea of subspace interference alignment - a concept previously introduced by Suh and Tse the context of wireless cellular networks.

\end{abstract}

\newpage
\section{Introduction}
Consider a distributed storage system with $n$ distributed data disks, with each disk storing one unit of data. Assume that the amount of information to be stored in this storage system is equal to $k$ units, where $k < n,$ with the extra storage space of $n-k$ units used to build redundancy in the system. Then, it is well known that the optimal tolerance to failures, for a fixed amount of storage, can be provided by using a $(n,k)$ maximum distance separable (MDS) erasure code to store the data. Such a code would tolerate \emph{any} $(n-k)$ disk failures (erasures), since the MDS property ensures that the original information can be recovered by using \emph{any} $k$ surviving disks. When disk failure occurs, efficient (fast) recovery of the failed disk/s is important, since replacing the failed disk/s before other disks fail reduces the chance of data loss and improves the overall reliability of the system. While an MDS code based storage system can tolerate a worst-case failure scenario of $k$ disks, the most common failure scenario in a storage system is the case where a single disk fails. A problem that has received considerable attention in recent literature \cite{Dimakis_Godfrey_Wainwright_Ramachandran, Wu_Dimakis, Shah_etal, Suh_Ramachandran, Cadambe_Jafar_Maleki, Suh_Ramachandran_general, Gaston_Pujol_Circulant, Dimakis_Survey, Cullina_Dimakis_Ho, Dimakis_EVENODD, RDP_repair}, and is the focus of this paper, is the recovery efficiency (speed) of a single disk failure in an MDS code for distributed storage systems. The increased interest in efficient repair for erasure codes stems, in part, from connections of the problem to various important topics in information and coding theory. First, naturally, the problem is related to the classical field of erasure coding for storage. Second, as demonstrated in \cite{Dimakis_Godfrey_Wainwright_Ramachandran}, the problem of efficient repair connected to network coding. In particular, it is connected to the multi-source network coding problem with generalized demands (i.e., non-multicast) - a classical open problem in network information theory. Finally, as demonstrated in references \cite{Wu_Dimakis, Shah_etal, Suh_Ramachandran}, it is connected to the interference management strategy of interference alignment  - technique widely studied in the context of wireless communications. This final connection will be especially explored in detail in this paper.



When a single node fails in a storage system, a new node enters the storage system, connects to the surviving $d=n-1$ disks via a network, downloads data from these $(n-1)$ surviving disks, and reconstructs the (data stored in the) failed node\footnote{In this paper, we restrict ourselves to \emph{exact repair}, where the new node has to be a replica of the failed node. Note that this is unlike \cite{Dimakis_Godfrey_Wainwright_Ramachandran, Wu_Explicit} which consider \emph{functional repair} where the new node only has to be information equivalent to the failed node}. The primary factor in determining the speed of recovery of the failed node is the amount of time taken for the new node to download the necessary data from surviving disks, which, in turn, depends on the amount of data accessed and downloaded\footnote{There is a subtle difference between the amount of data \emph{accessed} and \emph{downloaded}; Such differences are explored later in Section \ref{sec:access}} by the new node. This problem has been studied from the perspective of the amount of data to be {downloaded} - also known as the \emph{repair bandwidth} - by the new node for successful recovery of the failed node in \cite{Dimakis_Godfrey_Wainwright_Ramachandran, Wu_Dimakis, Shah_etal, Suh_Ramachandran, Cadambe_Jafar_Maleki, Suh_Ramachandran_general, Gaston_Pujol_Circulant, Dimakis_Survey, Cullina_Dimakis_Ho}. Note that a trivial repair strategy for any $(n,k)$ MDS code is to achieve a repair bandwidth of $k$ units for a single failed disk. This is because the entire original data, and hence the failed disk, can be recovered with the new node reading any set of $k$ surviving disks completely. A natural question of interest is the following: what is the minimum repair bandwidth required for a single failed node in an MDS code based distributed storage system? A cut-set lower bound for this question, i.e., for the minimum repair bandwidth, was derived to be $\frac{n-1}{n-k} <k $ units in reference \cite{Dimakis_Godfrey_Wainwright_Ramachandran}. The question of whether this lower bound is achievable via code constructions has received considerable attention in recent literature \cite{Wu_Dimakis, Shah_etal, Suh_Ramachandran, Cadambe_Jafar_Maleki, Suh_Ramachandran_general, Gaston_Pujol_Circulant, Dimakis_Survey}. In particular, recent literature has made progress on this problem of minimum repair bandwidth for repair by drawing connections to the wireless interference management technique of interference alignment. Related results in current literature related to this problem is summarized below.
\begin{enumerate}
\item \emph{Finite Codes for Low Rates:} By connecting the problem of exact repair to the wireless interference management technique of interference alignment, codes which achieve the repair-bandwidth lower bound of $\frac{(n-1)}{n-k}$ have been found in \cite{Wu_Dimakis, Suh_Ramachandran, Shah_etal, Cullina_Dimakis_Ho} for the case where $k \leq \max(n/2,3)$. In other words, if the \emph{rate}, $k/n$, of the code is smaller or equal to than half, finite explicit MDS code constructions exist which can repair a failed node with a repair bandwidth of $\frac{n-1}{n-k}$ units. The repair bandwidth was achieved with the new node downloading $\frac{1}{n-k}$ units of each of the $n-1$ surviving nodes. 
\item \emph{Asymptotic Codes for Arbitrary $(n,k)$:} For arbitrary $(n,k)$ references \cite{Cadambe_Jafar_Maleki, Suh_Ramachandran_general} used the asymptotic interference alignment scheme constructed in reference \cite{Cadambe_Jafar_int}, in the context of wireless interference channels, to generate codes which achieve the optimal repair bandwidth of $\frac{(n-1)}{n-k}$ asymptotically as the size of the code becomes arbitrarily large. 
\end{enumerate}
While the above results are interesting from a theoretical perspective, a matter of relevance for several storage systems in practice are efficient repair strategies for high rate codes, i.e., for storage systems that have a small number of parity nodes as compared to the number of systematic codes and therefore operate in the regime where $k/n > 1/2$. While this asymptotic constructions provide an interesting theoretical limit to what practical codes can achieve, the existence of \emph{finite} codes achieving a repair bandwidth of $\frac{n-1}{n-k}$ units remained an open problem of practical interest. In fact, for arbitrary $(n,k)$, the construction of finite codes having a repair strategy more efficient than the trivial repair strategy with a repair bandwidth of $k$ units remained open. It is this open problem that is the main focus of this paper. We shall next take a closer look at this open problem from the perspective of literature and techniques associated with interference alignment.

\subsection{Connections of Repair Bandwidth to Interference Alignment}
In the context of linear codes (which suffices for this paper), the connections between exact repair and interference in wireless systems can be understood as follows. Consider an $(n,k)$ systematic code, where the first $k$ nodes are systematic and hence store $k$ (uncoded) independent sources, each of size one unit. The remaining $n-k$ nodes are parity nodes. Each parity node stores a linear combination of the $k$ sources, where the combinations are defined by the code generator matrix. Now, suppose that a node, say the first node, fails. In order to repair this node, we assume that the new node downloads a certain set of linear combinations from each of the $n-1$ surviving nodes. The goal is to recover the first source from this set of linear combinations. The $k-1$ surviving systematic nodes store information that is independent of the first source. The information of this first source is stored in the $n-k$ parity nodes - but this desired information in the parity nodes is ``mixed'' with the remaining $(k-1)$ sources corresponding to the remaining $k-1$ systematic nodes. These $k-1$ sources which are not required by the new node, but arrive in the linear combinations downloaded from the parity nodes because they are ``mixed'' with the first source are analogous to interference in wireless communication systems. The coding matrices, which define how the sources are mixed into parity nodes, are analogous to channel matrices in wireless communications which also perform the same function. The linear combinations downloaded by the new node to repair the failed node are analogous to the beamforming vectors in wireless communications (See \cite{Suh_Ramachandran, Suh_Ramachandran_general} for instance). In both applications, the greater the extent of alignment, the more efficient is the system. In the wireless context, interference alignment reduces the footprint of the interference at a receiver and frees a greater number of dimensions for the desired signal (and typically leads to improved number of \emph{degrees of freedom} \cite{Cadambe_Jafar_int}). In the repair context, interference alignment reduces the footprint of the interfering sources at the new node, and hence means that a smaller number of units need to be downloaded to cancel this interference. However, one important difference exists - in the wireless context, the channel matrices are given by nature and cannot be controlled, whereas, in the storage context, the coding matrices are a design choice. 

The approach of references \cite{Cadambe_Jafar_Maleki, Suh_Ramachandran} in asymptotic code construction essentially stemmed from mimicking the wireless interference channel matrices in code construction. These references used diagonal coding sub-matrices analogous to those obtained using symbol extensions and vector coding in wireless channels without inter-symbol-interference. The surprising insight of these references is that, even though there is additional freedom in the storage context as compared to the wireless context because the coding matrices can be designed, the cut-set lower bound can be achieved asymptotically by mimicking the wireless channel matrices for coding in the storage context. In other words, there is no loss from the perspective of the extent of alignment, in an asymptotic sense, when the wireless channel matrices are used for coding in the storage context. Because the coding matrices are analogous to the channel matrices in wireless context, the size of the code is similar to the size of the channel matrices (or the symbol extensions used). In the wireless context of naturally occurring channel matrices, asymptotically large channel matrices (and more generally, asymptotically large amount of diversity) is necessary in general to achieve the maximum extent of alignment, at least, with linear schemes \cite{Cadambe_Jafar_int, Bresler_Tse_diversity}. However, the existence of finite codes for storage is related to the following question: if we have the freedom to design these coding (channel) matrices, can we achieve the desired extent of alignment with finite-size matrices, or are asymptotic schemes unavoidable much like the wireless context? It is worth noting that literature in interference alignment contains examples of wireless channels with certain special channel matrices, where, interference alignment is indeed achieved with finite-size channel matrices \cite{Cadambe_Jafar_int, Nazer_Gastpar_Jafar_Vishwanath, Suh_Tse_subspace}. Of relevance to this work is reference \cite{Suh_Tse_subspace} which shows that if the channel matrices have a specific tensor (Kronecker) product structure, then alignment is possible with finite-size channel matrices using the notion of subspace interference alignment. While these examples serve the purposes of simplifying the concept of alignment for exposition, their practical applicability in the wireless context is limited, because of the nature of the wireless channel. In the storage context, however, the coding (channel) matrices are a design choice; in this paper, we exploit this flexibility and the insights of interference alignment literature (and reference \cite{Suh_Tse_subspace}, in particular) to develop finite-size code constructions for distributed storage.

Before we proceed, we note that there exists, in literature, a parallel line of work, which studies the repair bandwidth for codes which are not necessarily MDS and hence use a greater amount of storage for a given amount of redundancy \cite{Dimakis_Godfrey_Wainwright_Ramachandran, Shah_etal, Shah_insufficiency, Rouayheb_Ramchandran_Fractional}. These references study the trade-off between the amount of storage and the repair bandwidth required, for a given amount of redundancy. Further, we also note that design of codes, from the perspective of efficient recovery of its information elements for {error-correcting} (rather than erasure) erasure has also been studied in literature in associated with \emph{locally decodable codes} (See \cite{Yekhanin_Locally_Decodable} and references therein). The focus of this paper, however, is on MDS erasure codes (also referred to as minimum storage regenerating codes), i.e., $(n,k)$ codes which can tolerate any $(n-k)$ erasures. 

\section{Summary of Contributions}
The main contribution of this paper is the design a new class of MDS codes which achieve the minimum repair bandwidth of $\frac{n-1}{n-k}$ units for the repair of a single failed \emph{systematic} node. Our constructions operate with the new node downloading $\frac{1}{n-k}$ units from each of the $n-1$ surviving nodes for repair. The code constructions presented in this paper are listed below. \begin{enumerate}
\item \emph{Permutation Matrix Based Codes\footnote{The authors of reference \cite{Bruck_etal, Bruck_etal_ISIT} have discovered this class of permutation-matrix based codes in parallel work.} for General $(n,k)$: } In Section \ref{sec:Permutation}, we present a construction of codes which achieve the repair bandwidth lower bound of $\frac{n-1}{n-k}$ units for repair of systematic nodes for any tuple $(n,k)$ where $n>k$. The code generator submatrices of the construction are based on permutation matrices. The code construction, albeit finite, is based on random coding, with the random coding argument used to justify the existence of a repair-bandwidth optimal MDS code. This means that for any arbitrary $(n,k)$, a brute-force search over a (finite) set of codes described in the section, will yield a repair-bandwidth optimal MDS code. 
\item \emph{Explicit Construction for $n-k \in \{2,3\}$: } While Section \ref{sec:Permutation} describes a random coding based construction, we also provide in Section \ref{sec:explicit}, \emph{explicit} constructions for the special case of $n-k \in \{2,3\}$. 
\item \emph{Subspace Interference Alignment Framework for Optimal Repair:} In Section \ref{sec:Tensors}, we connect the idea of interference alignment via tensor (Kronecker) products, originally introduced in \cite{Suh_Tse_subspace}, to the Permutation matrix based codes developed in Section \ref{sec:Permutation}. The tensor-product based alignment framework, also termed subspace alignment in \cite{Suh_Tse_subspace}, provides a generalization of the Permutation matrix based codes developed in Section \ref{sec:Permutation}, and leads to a development of a family of MDS codes with optimal repair bandwidth. 
\end{enumerate}
It must be noted that the search for codes with efficiently repair both systematic and parity nodes is still open. However, from a practical perspective, the step taken in this paper is important since, in most storage systems, the number of parity nodes is small compared to systematic nodes.

\subsection{Efficient Code Construction in terms of Disk Access}
\label{sec:access}
While most previous works described above explore the repair problem by accounting for the amount of information to be sent over the network for repair, there exists another important cost during the repair of a node viz. amount of disk access. To understand the difference between these two costs, consider a toy example of a case where a disk stores two bits $a_1, a_2$. Now, suppose that, to repair some other failed node in this system, the bit $a_1+a_2$ has to be sent to a new node. This means that the bandwidth required for this particular disk is $1$ bit. However, in many storage systems, the disk-read speed is slower than the network transfer speeds and hence becomes a bottleneck. In the case where the disk read speed is a bottleneck, the defining factor in the speed of repair is the \emph{amount of disk access} rather than the repair bandwidth. In the toy example described the amount of disk access is $2$ bits as both $a_1$ and $a_2$ have to be read from the disk to compute $a_1+a_2$. Thus, it is possible that certain codes, while minimizing repair bandwidth, can perform poorly in terms of disk access rendering the codes impractical. In this paper, we will formalize this notion of disk access cost, and show that the codes based on permutation matrices in Section \ref{sec:Permutation} are not only bandwidth optimal, but also disk-access optimal, for the repair of a single failed systematic node.

\section{A Linear Algebraic Problem}
\label{sec:problem}
We begin by describing a linear algebraic problem which lies at the core of repair-optimal MDS codes. In particular, the problem described here is the problem we solve to find the optimal repair of $(n=k+2,k)$ codes. We start with a simpler problem which lies at the core of the special case where $(n=4,k=2)$ repair-optimal code and later generalize the problem.
\subsection*{Problem 1: A Simple Feasibility Problem} 
 Consider the following set of equations. 
\begin{eqnarray}
\mbox{rowspan} (\mathbf{V}_{1} \mathbf{H}_{2}) &=& \mbox{rowspan}({\mathbf{V}_1})\label{eq:1}\\
\mbox{rank}\left[\begin{array}{cc}\mathbf{V}_{1} \\ \mathbf{V}_1\mathbf{H}_{1}\end{array}\right] &=& L\label{eq:2}\\
\mbox{rowspan} \left(\mathbf{V}_{2} \mathbf{H}_{1}\right) &=& \mbox{rowspan}({\mathbf{V}_2})\label{eq:3}\\
\mbox{rank}\left[\begin{array}{c}\mathbf{V}_{2} \\ \mathbf{V}_{2}\mathbf{H}_{2}\end{array}\right] &=& L\label{eq:4}\\
\mbox{rank}(\mathbf{V}_{i})= \mbox{rank}(\mathbf{H}_{i})/2 &=& L/2, i=1,2 \label{eq:5}
\end{eqnarray}
where $\mathbf{H}_{1}$ and $\mathbf{H}_{2}$ be $L \times L$ matrices over some finite field. Now, the question of interest is, \emph{are the above set of equations feasible?} In other words, can we choose matrices $\mathbf{H}_{i},\mathbf{V}_{i}$ so that the above equations are satisfied. We assume that the field size and the size of $\mathbf{H}_{i},$ i.e., $L$ are parameters of choice. Because of (\ref{eq:5}), we can assume without loss of generality that $\mathbf{V}_{i},i=1,2$ are $L/2 \times L$ matrices.

Now, (\ref{eq:1}), (\ref{eq:3}) imply that the space spanned by the rows of $\mathbf{V}_{i}$ is an \emph{invariant} subspace of $\mathbf{H}_{j},$ for $i=1,2, j \in \{1,2\}-\{i\}$. Further, (\ref{eq:2}),(\ref{eq:4}) imply that none of the row vectors $\mathbf{V}_{i}$ lie in the span of $\mathbf{V}_{i}\mathbf{H}_{i}$ for $i=1,2$.\footnote{For the reader familiar with interference alignment literature in wireless communications, equations (\ref{eq:1}),(\ref{eq:3}) are similar to the conditions that all the interference align along $\mathbf{V}_{i}$, where $\mathbf{H}_{j},j\neq i$ is analogous to the channel matrix corresponding to an interfering link (See \cite{Cadambe_Jafar_int} for example). Similarly, conditions (\ref{eq:2}) and (\ref{eq:4}) are analogous to the condition that the desired signal appearing along matrix $\mathbf{H}_{i}$ is linearly resolvable from the aligned interference $\mathbf{V}_{i}$. The key difference between this problem and from most of interference alignment literature in wireless communications, is that, here, unlike in the latter, matrices $\mathbf{H}_{i}$ are design choices.} Before solving this problem, it is worth noting that $\mathbf{V}_i$ has to have at least $L/2$ linearly independent row vectors - or equivalently, a rank of at least $L/2$ - in order to satisfy (\ref{eq:2}),(\ref{eq:4}). Further, also note that, if we had allowed $\mathbf{V}_i, i=1,2$ to each have a rank as large as $L$ rather than $L/2$ in equation (\ref{eq:5}), the solution could have been trivial since any full rank matrices $\mathbf{V}_i, \mathbf{H}_{i},i=1,2$ would used to satisfy the conditions (\ref{eq:1})-(\ref{eq:4}). The question posed here, however, is whether there exist matrices $\mathbf{V}_{i}$ having exactly $L/2$ linearly independent row vectors, satisfying the above equations. It turns out that this problem has a fairly simple solution with $L=2$ and field size $q=5$. To see this, note that with $L=2$, (\ref{eq:1}) and (\ref{eq:3}) can be interpreted as eigen vector equations. Therefore, we can choose $\mathbf{V}_{1}^T$ to be an eigen vector of $\mathbf{H}_{2}^T$ and $\mathbf{V}_{2}^T$ to be an eigen vector of $\mathbf{H}_{1}^T$. As long as $\mathbf{H}_{1}^T$ and $\mathbf{H}_{2}^T$ can be chosen so that they have distinct (non-collinear) sets of eigen-vectors, the equations (\ref{eq:2}) and (\ref{eq:4}) are satisfied. It can be verified that in a field of size $5$, $\mathbf{H}_{1}$ and $\mathbf{H}_{2}$ can be chosen so that this property is satisfied. In fact, in a sufficiently large field size, the entries of $\mathbf{H}_{i},i=1,2$ can be randomly chosen independently, and uniformly over the entries of the field. With such a choice, it can be shown that, if $\mathbf{V}_{i}^T$ is chosen to be the eigen-vector of $\mathbf{H}_{j}^{T},j \neq i$ the equations (\ref{eq:1})-(\ref{eq:5}) are satisfied with a non-zero probability, thus guaranteeing feasibility.  The solution to this problem automatically implies that for $n=4,k=2,$ a single failed systematic node can be repaired by downloading exactly half the data stored in every surviving node (see Fig. \ref{fig:4,2}).

\begin{figure}
\center
\centerline{\psfig{figure=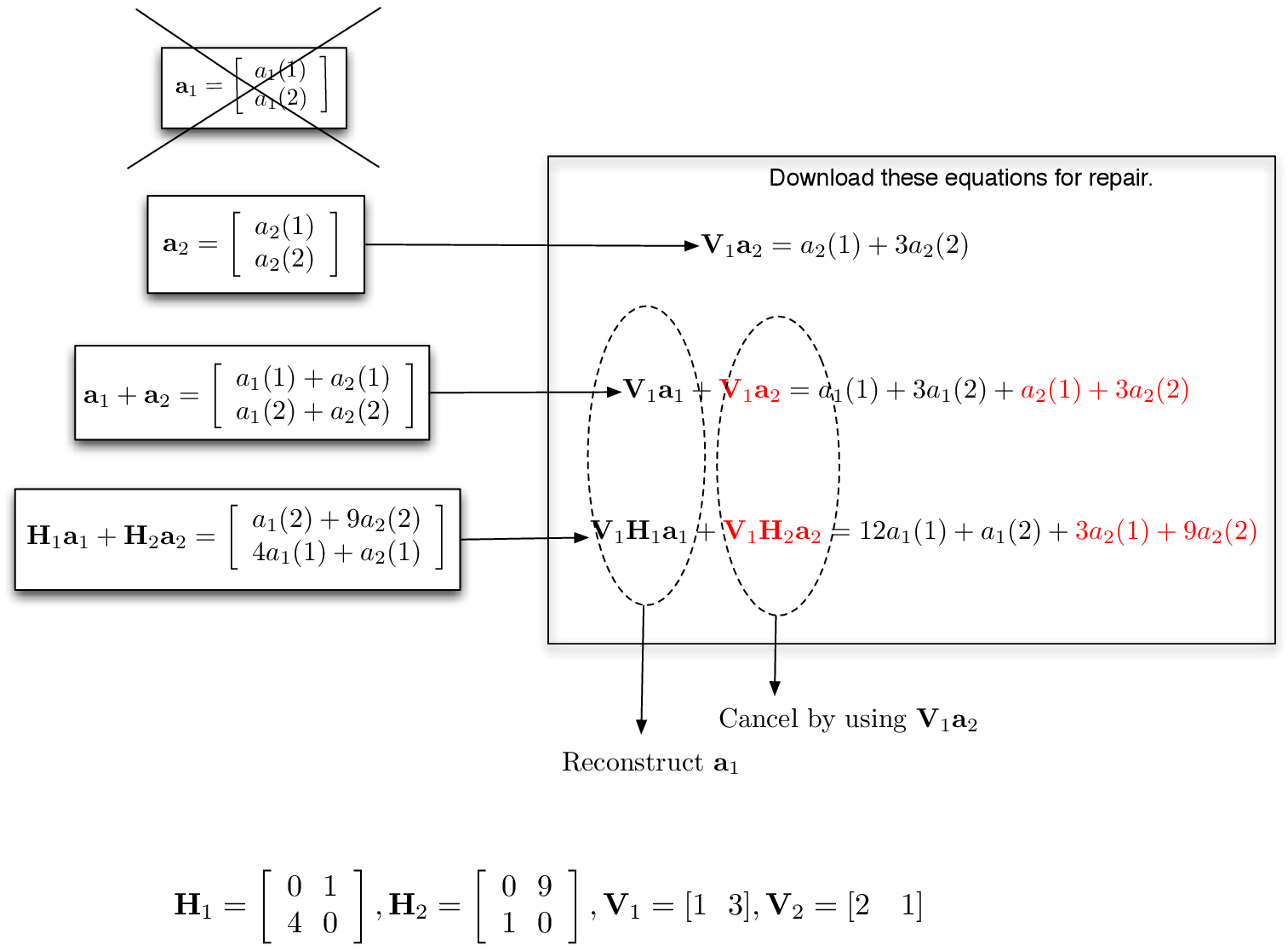, height=4 in}}
\caption{Repair of the first failed node in a (4,2) MDS code-based system along the lines of \cite{Wu_Dimakis}. Note that the repair is possible because (\ref{eq:1}) enables cancellation of $\mathbf{V}_{1}\mathbf{a}_{2}$ and (\ref{eq:2}) enables reconstruction of $\mathbf{a}_{1}$. Similarly equations (\ref{eq:3}),(\ref{eq:4}) enable repair when the second node fails, where $\mathbf{V}_{2}$ is used for obtaining linear combinations from the surviving nodes.}
\label{fig:4,2}
\end{figure}

\subsection*{Problem 2: Increase the number of constraints in Problem 1}
Now, let us generalize Problem 1. The goal of this generalized version is to verify the feasibility of the following equations, where  $\mathbf{H}_{i},i=1,2,\ldots,N$ are $L \times L$ matrices and $\mathbf{V}_{i},i=1,2,\ldots,N$ are  $L/2 \times L$ matrices.
\begin{eqnarray}
\mbox{span} (\mathbf{V}_{i} \mathbf{H}_{j}) &=& \mbox{span}({\mathbf{V}_i}), j\in\{1,2,\ldots,N\}-\{i\}\label{eq:1a}\\
\mbox{rank}\left[\begin{array}{cc}\mathbf{V}_{i} \\ \mathbf{H}_i\mathbf{V}_{i}\end{array}\right] &=& L, i=1,2,\ldots,N\label{eq:2a}\\
\mbox{rank}(\mathbf{V}_{i}) = \mbox{rank}(\mathbf{H}_{i})/2 &=& L/2, i=1,2,\ldots,N \label{eq:3a}
\end{eqnarray}
where $L$ is a parameter of choice. Here, it is worth noting two things. First, as before, if we had intended to find $L \times L$ matrices $\mathbf{V}_{i}$ satisfying (\ref{eq:1a}),(\ref{eq:2a}), the problem would have been trivial. Also, $\mathbf{V}_{i}$ can have no smaller than $L/2$ rows because of (\ref{eq:2a}). The question here, as before, is to construct $\mathbf{V}_{i}$ each of which have exactly $L/2$ row vectors satisfying the above conditions. The second point worth noting is that Problem $2$ is more challenging than Problem $1$ because the constraints here are more strict than the constraints of Problem 1. Problem $1$ is, in fact, a special case of the above problem when $N=2$. However, as $N$ increases, the number of constraints increases. This poses some additional constraints on the choice of matrices as compared to Problem 1. For instance, we will need $\mathbf{H}_{1}$ and $\mathbf{H}_{2}$ to have $N-2$ distinct common invariant subspaces $\mathbf{V}_{m},m\neq i, m\neq j$, in addition, to the condition that $\mathbf{V}_{1}$ (resp. $\mathbf{V}_{2}$) is invariant w.r.t. $\mathbf{H}_{2}$ (resp. $\mathbf{H}_{1}$) but linearly independent of $\mathbf{V}_{1}\mathbf{H}_{1}$ (resp. $\mathbf{H}_{2}$). Therefore, it is not clear at first sight whether the issue of feasibility can be resolved for arbitrary $N$. 

References \cite{Cadambe_Jafar_Maleki, Suh_Ramachandran} show that the above constraints can be satisfied asymptotically, as $L \to \infty,$ by using random diagonal matrices for $\mathbf{H}_{i}$ and the asymptotic interference alignment solution of \cite{Cadambe_Jafar_int} to construct $\mathbf{V}_{i}$ for $i=1,2,\ldots,N$. However, it was not known whether the above set of constraints is feasible when $L$ is restricted to be finite - it is this open problem that is solved in this paper. In particular, we will use a tensor-product based framework which enables us to decompose this problem into several instances of Problem 1 and hence show feasibility. Put differently, the framework will enable us to stitch multiple instances of problem $1$ using the idea of tensor products to solve the above problem.

The rest of this paper is organized as follows. We will first present a framework used for our repair optimal code in the next section. In this next section, we will also connect the repair problem to the problem presented above. In Sections \ref{sec:Permutation} and \ref{sec:explicit}, we will respectively present random codes and explicit codes based on permutation matrices which are optimal from the perspective of repair of a single systematic node. These constructions can be interpreted as a solution to the above problem where $\mathbf{H}_{i}$ are permutation matrices. In Section \ref{sec:Tensors}, we will revisit the problem described above, and present our tensor-product based framework to solve this problem. The framework of Section \ref{sec:Tensors} generalizes the permutation matrix based construction of \ref{sec:Permutation}. 

\section{System Model -  Optimal Repair for an $(n,k)$ MDS Code}
In this section we present a general framework for optimal repair a single failed node in a linear MDS code based distributed storage system.
Consider $k$ sources, all of equal size $\LL=\MM/k$ over a field $\mathbb{F}_{q}$ of size $q$. Source $i\in\{1,2,\ldots,k\}$ is represented by the $\LL \times 1$ vector $\mathbf{a}_i \in \mathbb{F}_q^{\LL}$. Note here that $\MM$ denotes the size of the total information stored in the distributed storage system, in terms of the number of elements over the field. There are $n$ nodes storing the $k$ source (vector) symbols using an $(n,k)$ MDS code. Each node stores a data of size $\LL$, i.e., each coded (vector) symbol of the $(n,k)$ code is a $\LL \times 1$ vector. Therefore, $1$ unit is equivalent to $\LL$ scalars over the field $q$. The data stored in node $i$ represented by $\LL \times 1$ vector $\mathbf{d}_i$, where $i=1,2,\ldots,n$.
We assume that our code is linear and $\mathbf{d}_i$ can be represented as
$$\mathbf{d}_i = \sum_{j=1}^{k} \mathbf{C}_{i,j} \mathbf{a}_j,$$
where $\mathbf{C}_{i,j}$ are $\LL \times \LL$ square matrices.
Further, we restrict our codes to have a systematic structure, so that, for $i \in \{1,2,\ldots,k\}$,
$$\mathbf{C}_{i,j} = \left\{ \begin{array}{cc} \mathbf{I} & j = i\\ \mathbf{0} & j \neq i \end{array} \right\}.$$ Since we restrict our attention to MDS codes, we will need the matrices $\mathbf{C}_{i,j}$ to satisfy the following property
\begin{property}
\label{property:MDS}
\begin{equation}\mbox{rank}\left(\left[ \begin{array}{cccc} \mathbf{C}_{j_1, 1}& \mathbf{C}_{j_1, 2}& \ldots &\mathbf{C}_{j_1, k}\\
\mathbf{C}_{j_2, 1}& \mathbf{C}_{j_2, 2}& \ldots & \mathbf{C}_{j_2, k}\\
\vdots& \vdots & \ddots & \vdots\\
\mathbf{C}_{j_k, 1}& \mathbf{C}_{j_k, 2}& \ldots & \mathbf{C}_{j_k, k} \end{array} \right] \right) = \LL k = \MM \label{eq:MDS} \end{equation}
for any distinct $j_1, j_2, \ldots, j_k \in \{1,2,\ldots, n\}$.
\end{property}
The MDS property ensures that the storage system can tolerate up to $(n-k)$ failures (erasures), since all the sources can be reconstructed from any $k$ nodes whose indices are represented by $j_1,j_2,\ldots,j_k \in \{1,2,\ldots,n\}$. 
Now, consider the case where a single systematic node, say node $i \in \{1,2,\ldots,k\}$ fails. The goal here is to reconstruct the failed node $i$, i.e., to reconstruct $\mathbf{d}_i,$ using all the other $n-1$ nodes, i.e., $\{ \mathbf{d}_{j}: j \neq i\}$. 
To understand the solution, first, consider the case where node $1$ fails. We download a fraction of $\frac{1}{n-k}$ of the data stored in each of the nodes $\{1,2,3,\ldots,n\} - \{1\},$ so that the total repair bandwidth is $\frac{n-1}{n-k}$ units. We focus on linear repair solutions for our codes, which implies that we need to download $\frac{\LL}{n-k}$ linear combinations from each of $\mathbf{d}_{j}, j \in \{2,3,\ldots,n\}$. Specifically, we denote the linear combination downloaded from node $j\in\{2,3,\ldots,n\}$ as 
\begin{eqnarray*} \mathbf{V}_{1,j}\mathbf{d}_{j}&=&\mathbf{V}_{1,j} \sum_{i=1}^{k} \mathbf{C}_{j,i} \mathbf{a}_i\\
&=& \underbrace{\mathbf{V}_{1,j} \mathbf{C}_{j,1} \mathbf{a}_1}_{\mbox{Desired signal component}}+ \underbrace{\mathbf{V}_{1,j} \sum_{i=2}^{k} \mathbf{C}_{j,i} \mathbf{a}_i}_{\mbox{Interference component}},
\end{eqnarray*}
where $\mathbf{V}_{1,j}$ is a $\frac{\LL}{n-k} \times \LL$ dimensional matrix. The matrices $\mathbf{V}_{1,j}$ are referred to as \emph{repair matrices} in this paper. The goal of the problem is to construct $\LL$ components of $ \mathbf{a}_{1}$ from the above equations. For systematic node $j \in\{2,3,\ldots, k\}$, the equations downloaded by the new node do not contain information of the desired signal $\mathbf{a}_1$, since for these nodes, $\mathbf{C}_{j,1}=\mathbf{0}$. The linear combinations downloaded from the remaining nodes $j \in \{k+1,k+2,\ldots,n\}$, however, contain components of both the desired signal and the interference. Thus, the downloaded linear combinations $\mathbf{V}_{1,j} \mathbf{d}_{j}$ are of two types.
\begin{enumerate}
\item The data downloaded from the surviving systematic nodes $i=2,\ldots, k$ contain no information of the desired signal $\mathbf{a}_1$, i.e., 
$$\mathbf{V}_{1,j} \mathbf{d}_{j} = \mathbf{V}_{1,j} \mathbf{a}_{j}, j=2,\ldots, k.$$
Note that there $\frac{\LL}{n-k}$ such linear combinations of each interfering component $\mathbf{a}_{j},j=2,3,\ldots,k$.
\item Now, from each of the $n-k$ parity nodes, $\frac{\LL}{n-k}$ linear combinations are downloaded. Therefore, a total of $\LL$ linear combinations are downloaded from parity nodes. The $\LL$ components of the desired signal have to be reconstructed using these $\LL$ linear combinations of the form $\mathbf{V}_{1,j}\mathbf{d}_{j}, j=k+1, k+2,\ldots,n$. Note here that these are $\mathcal{L}$ linear equations in $k\mathcal{L}$ scalars - the $\mathcal{L}$ desired components of $\mathbf{a}_{1}$ \emph{and} $(k-1)\mathcal{L}$ interfering components of $\mathbf{a}_{2},\mathbf{a}_{3},\ldots, \mathbf{a}_{k}.$  For successful reconstruction of the desired signal, the interference terms associated with $\mathbf{a}_{j}, j=2\ldots,k$ contained in these linear combinations need to be cancelled completely.
\end{enumerate}
 The goal of our solution will be to completely cancel the interference from the second set of $\LL$ linear combinations, using the first set of linear combinations. Then $\mathbf{a}_1$ is regenerated using this second set of $\LL$ interference-free linear combinations (See Fig. \ref{fig:repair53}).
\subsection{Interference Cancellation}
\begin{figure}
\center
\centerline{\psfig{figure=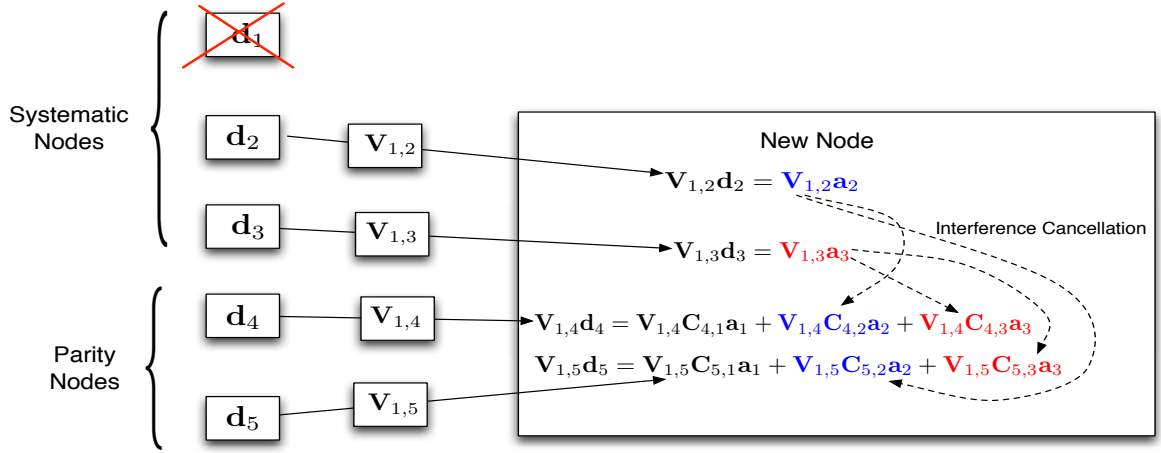,width=6.1 in,height=2.4in}}
\caption{Repair of first node for $n=5,k=3$. Equation (\ref{eq:alignmenta}) ensures that interference cancellation is possible}
\label{fig:repair53}
\end{figure}

 The linear combinations corresponding to interference component $\mathbf{a}_{i}, i \neq 1$ downloaded using node $i$ by the new node is $\mathbf{V}_{1,i} \mathbf{a}_i$ for $i=2,3,\ldots,k$. To cancel the associated interference from all the remaining nodes $\mathbf{V}_{1,j} \mathbf{d}_{j}$  by linear techniques, we will need, $\forall j=k+1, k+2, \ldots, n$, $\forall i=2,3,\ldots,k$
\begin{eqnarray}
\mbox{rowspan}(\mathbf{V}_{1,j} \mathbf{C}_{j,i}) &\subseteq& \mbox{rowspan}(\mathbf{V}_{1,i}), \nonumber\\
\Rightarrow \mbox{rowspan}(\mathbf{V}_{1,j} \mathbf{C}_{j,i}) &=& \mbox{rowspan}(\mathbf{V}_{1,i}), \label{eq:alignmenta} \label{eq:alignment}\end{eqnarray}
where (\ref{eq:alignmenta}) follows because $\mathbf{C}_{j,i}$ are all full rank matrices and therefore, the subset relation automatically implies the equality relation as $\mbox{rank}(\mathbf{V}_{1,j} \mathbf{C}_{j,i}) = \mbox{rank}(\mathbf{V}_{1,j})=\frac{\LL}{n-k} = \mbox{rank}(\mathbf{V}_{1,i})$. Thus, as long as (\ref{eq:alignmenta}) is satisfied for all values $j \in \{k+1,k+2,\ldots,n\},i \in \{2,3,\ldots,k\}$, the interference components can be completely cancelled from $\mathbf{V}_{1,j} \mathbf{d}_{j}$ to obtain $\mathbf{V}_{1,j} \mathbf{C}_{j,1} \mathbf{a}_{1}, j \in \{k+1,k+2,\ldots,n\}$ (See Fig. \ref{fig:repair53}).
 Now, we need to ensure that the desired $\LL \times 1$ vector $\mathbf{a}_1$ can be uniquely resolved from the $\LL$ linear combinations of the form $\mathbf{V}_{1,j} \mathbf{C}_{j,1} \mathbf{a}_1, j=k+1,k+2,\ldots,n$. In other words, we need to ensure that 
\begin{eqnarray}
\mbox{rank}\left(\left[\begin{array}{c}\mathbf{V}_{1,k+1} \mathbf{C}_{k+1,1} \\ \mathbf{V}_{1,k+2} \mathbf{C}_{k+2,1} \\ \vdots \\ \mathbf{V}_{1,n} \mathbf{C}_{n,1}  \end{array} \right] \right) &=& \LL
\label{eq:reconstruction0}
\end{eqnarray}
If we construct $\mathbf{C}_{l,j}$ and $\mathbf{V}_{1,i}$  satisfying (\ref{eq:alignmenta}) and (\ref{eq:reconstruction0}) for $i=2,\ldots, n, j=1,2,\ldots, k, l=1,2,3\ldots, n$, then, a failure of node $1$ can be repaired with the desired minimum repair bandwidth. To solve the problem for the failure of any other systematic node, we need to ensure similar conditions. We summarize all the conditions required for successful reconstruction of a single failed (systematic) node with the minimum repair bandwidth below.
\begin{itemize}
\item Equation (\ref{eq:MDS}) in Property \ref{property:MDS}.
\item The interference alignment relations.
\begin{eqnarray} \mbox{rowspan}(\mathbf{V}_{l,j}\mathbf{C}_{j,i}) = \mbox{rowspan}(\mathbf{V}_{l,i})\label{eq:alignment0} \end{eqnarray}
for $l=1,2,\ldots,k$, $j=k+1,k+2,\ldots,n$ and $i \in \{1,2,\ldots,k\} - \{l\}$
\item Reconstruction of the failed node, given that the alignment relations are satisfied.
\begin{eqnarray}
\mbox{rank}\left(\left[\begin{array}{c}\mathbf{V}_{l,k+1} \mathbf{C}_{k+1,l} \\ \mathbf{V}_{l,k+2} \mathbf{C}_{k+2,l} \\ \vdots \\ \mathbf{V}_{l,n} \mathbf{C}_{n,l}  \end{array} \right] \right) &=& \LL
\label{eq:reconstruction}
\end{eqnarray}
for $l=1,2,\ldots,k$.
\end{itemize}
Note that given $n,k$, our design choices are $\LL$, $q$, $\mathbf{C}_{j,i}$ and $\mathbf{V}_{l,j}$ for $l=1,2,\ldots,k$, $j=k+1,k+2,\ldots,n$ and $i \in \{1,2,\ldots,k\} - \{l\}.$

Reference \cite{Shah_etal} has shown that the above conditions \emph{cannot} be satisfied if we restrict ourselves to $\MM= k(n-k)$. References \cite{Cadambe_Jafar_Maleki, Suh_Ramachandran_general} constructed solutions which satisfied the above relations in an asymptotically exact, as $\MM \to \infty$. The main contribution of this paper is the construction of coding sub-matrices and repair matrices so that the above relations are satisfied exactly, with finite $\MM$, i.e., with $\MM=k (n-k)^k$. 

\subsection{Connections to Problem 2 in Section \ref{sec:problem}}
\label{sec:connections}
Above, we have defined a general structure for a linear, repair bandwidth optimal solution. In the specific solution described in this paper, the repair matrices satisfy an additional property: in our solution, $$ \mathbf{V}_{l,j} = \mathbf{V}_{l,j^{'}}$$ for all $l \in \{1,2,\ldots,k\}, j\neq j^{'}, j,j^{'} \in \{1,2,\ldots,n\}-\{l\}$. In other words, when a node, say node $l$, fails, we download the \emph{same} linear combination from every surviving node. We use the notation $$ \mathbf{V}_{l} \define \mathbf{V}_{l,j}$$ for all $j \in \{1,2,\ldots,n\}-\{l\}$. Further, in our solution the coding sub-matrices associated with the first parity node are all (scaled) identity matrices, i.e., $\mathbf{C}_{k+1,i} = \lambda_{k+1,i}\mathbf{I}_{\LL}$ for $i=1,2,\ldots,k$ where $\lambda_{k+1,i}$ is a scalar over the field $\mathbb{F}_{q}$,
so that $$\mathbf{d}_{k+1} = \sum_{i=1}^{k}\lambda_{k+1,i} \mathbf{a}_{i}$$
Now, with these choices, it can noted for $n-k=2,$ equations (\ref{eq:alignment0}) and (\ref{eq:reconstruction}) are equivalent to (\ref{eq:1a}) and (\ref{eq:2a}) in the previous section, where $k=N, \mathcal{L}=L$ and $\mathbf{C}_{k+2,i}=\mathbf{H}_{i}.$ Thus, the problems motivated in the previous section lie at the core of the repair problem.

\subsection{Disk-Access Optimality}
Our solution satisfies a disk-access optimality property which is defined formally here. 
\begin{definition}
Consider a set of $\LL \times \LL$ dimensional coding sub-matrices  $\mathbf{C}_{i,j}, i=k+1,k+2,\ldots, n, j=1,2,\ldots,k$ and a set of repair matrices $\mathbf{V}_{l,i}$ for some  $l\in\{1,2,\ldots,n\}$ and for all $i\in\{1,2,\ldots,n\}-\{l\},$ where the repair matrix $\mathbf{V}_{l,i}$ has dimension $B_{l,i} \times \LL,$ where $B_{l,i}\leq \LL$.  The repair matrices satisfy the property that $\mathbf{d}_{l}$ can be reconstructed linearly from $\mathbf{V}_{l,i} \mathbf{d}_{i}, i \in \{1,2,\ldots,n\}-\{l\}$. In other words, a failure of node $l$ can be repaired using the repair matrices. Then the amount of \emph{disk access} required for the repair of node $l$ is defined to be the quantity $$ \sum_{i=\{1,2,\ldots,n\}-\{l\}} \omega(\mathbf{V}_{l,i}) $$ 
where $\omega(\mathbf{A})$ represents the number of non-zero columns of matrix $\mathbf{A}$.
\end{definition}
To compute $\mathbf{V}_{l,i} \mathbf{d}_{i}$, only $\omega(\mathbf{V}_{l,i})$ entries of the matrix $\mathbf{d}_{i}$ have to be accessed. This leads to the above definition for the amount of disk access for a linear solution. Also, note that if $\textrm{rank}(\mathbf{V}_{l,i})$ - the amount of bandwidth used - is always smaller than $\omega(\mathbf{V}_{l,i}).$ Therefore, the amount of disk access is smaller than the amount of bandwidth used for a given solution. This leads to the following lemma.
\begin{lemma}
For any $(n,k)$ MDS code storing $1$ unit of data in each disk, the amount of disk access needed to repair any single failed node $l=1,2,\ldots,n$ is at least as large as $\frac{n-1}{n-k}$ units.
\end{lemma}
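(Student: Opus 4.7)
The plan is to observe that disk access always upper-bounds repair bandwidth at the per-node level, and then invoke the existing cut-set lower bound of Dimakis et al.\ on the latter.

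First I would establish the trivial linear-algebraic fact that for any matrix $\mathbf{A}$, one has $\mathrm{rank}(\mathbf{A}) \le \omega(\mathbf{A})$: if $\mathbf{A}$ has only $\omega(\mathbf{A})$ nonzero columns, then its column space is spanned by those columns, so its rank is at most $\omega(\mathbf{A})$, and by transposing (or equivalently noting rank equals column rank equals row rank) this bounds the rank. Applying this to each repair matrix gives $\omega(\mathbf{V}_{l,i}) \ge \mathrm{rank}(\mathbf{V}_{l,i})$ for every surviving node $i \neq l$.

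Next, summing over $i \in \{1,2,\ldots,n\}\setminus\{l\}$, the total disk access (in scalars) satisfies
\begin{equation*}
\sum_{i \neq l} \omega(\mathbf{V}_{l,i}) \;\ge\; \sum_{i \neq l} \mathrm{rank}(\mathbf{V}_{l,i}).
\end{equation*}
The right-hand side is precisely the repair bandwidth (measured in scalars over $\mathbb{F}_q$) used by this particular linear repair scheme: each node $i$ transmits the $\mathrm{rank}(\mathbf{V}_{l,i})$-dimensional vector of scalars $\mathbf{V}_{l,i}\mathbf{d}_i$ (up to the row-reduction that removes linearly dependent rows, which never increases the count). Since $\mathbf{d}_l$ must be reconstructible linearly from these transmissions, this is a valid exact-repair strategy.

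Finally, I would invoke the Dimakis et al.\ cut-set lower bound quoted in the introduction, which asserts that any exact-repair strategy for a single failed node in an $(n,k)$ MDS code with one unit per node must use at least $\frac{n-1}{n-k}$ units (i.e.\ at least $\frac{(n-1)\mathcal{L}}{n-k}$ scalars) of repair bandwidth. Combining this with the inequality above yields
\begin{equation*}
\sum_{i \neq l} \omega(\mathbf{V}_{l,i}) \;\ge\; \frac{(n-1)\mathcal{L}}{n-k},
\end{equation*}
which is exactly $\frac{n-1}{n-k}$ units of disk access, completing the proof. There is no real obstacle here; the entire content lies in the elementary $\mathrm{rank} \le \omega$ inequality, since the repair-bandwidth bound is a previously established result of Dimakis et al.
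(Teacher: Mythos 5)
Your proposal is correct and is exactly the argument the paper sketches in the paragraph preceding the lemma: observe the elementary inequality $\mathrm{rank}(\mathbf{V}_{l,i}) \le \omega(\mathbf{V}_{l,i})$ for each surviving node, sum, and invoke the Dimakis et al.\ cut-set lower bound of $\frac{n-1}{n-k}$ units on repair bandwidth. (The paper's prose contains an obvious typo---it says disk access is smaller than bandwidth where it means the reverse---but the intended argument matches yours.)
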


Our code constructions based on permutation matrices presented in the next section are not only repair bandwidth optimal, but it are also optimal in terms of disk access since they meets the bound of the above lemma. More formally, for our solution $\mathbf{V}_{j}$ not only has a rank of $\LL/(n-k)$, it also has exactly $\LL/(n-k)$ non-zero columns; in fact, $\mathbf{V}_{j}$ has exactly $\LL/(n-k)$ non-zero entries. Among the $\LL$ columns of $\mathbf{V}_{j}$, $\LL-\frac{\LL}{n-k}$ columns are zero. This means that, to obtain the linear combination $\mathbf{V}_{l} \mathbf{d}_{i}$ from node $i$ for repair of node $l \neq i$, only $\frac{\LL}{n-k}$ entries of the node $i$ has to be accessed. We now proceed to describe our solution.  


\section{Optimal Codes via Permutation Matrices}
\label{sec:Permutation}
In this section, we describe a set of random codes based on permutation matrices satisfying the desired properties described in the previous section. We begin with some preliminary notations required for our description.
\subsubsection*{Notations and Preliminary Definitions}
The bold font is used for vectors and matrices and the regular font is reserved for scalars. Given a $l \times 1$ dimensional vector $\mathbf{a}$ its $l$ components are denoted by 
$$ \mathbf{a} = \left[\begin{array}{c}a(1)\\a(2)\\\vdots \\a(l)\end{array} \right]$$
For example, $\mathbf{d}_{1} = \left[d_1(1)~~d_1(2)~~\ldots~~d_1(\LL) \right]^T$.  Given a set $\mathcal{A}$, the $l$-dimensional Cartesian product of the set is denoted by $\mathcal{A}^{l}$. The notation $\mathbf{I}_{l}$ denotes the $l \times l$ identity matrix; the subscript $l$ is dropped when the size $l$ is clear from the context. Next, we define a set of functions which will be useful in the description of our codes.

Given $(n,k)$ and a number $m \in \{1,2,\ldots,(n-k)^{k}\},$ we define a function\footnote{While the functions defined here are parametrized by $n,k$, these quantities are not explicitly denoted here for brevity of notation} $\fbeta: \{1,2,\ldots,(n-k)^{k}\} \rightarrow \{0,1,\ldots,(n-k-1)\}^{k}$ such that ${\fbeta}(m)$ is the unique $k$ dimensional vector whose $k$ components represent the $k$-length representation of $m-1$ in base $(n-k)$. In other words 
$$ \fbeta(m) = (r_1,r_2,\ldots,r_k) \Leftrightarrow m-1 = \sum_{i=1}^{k} r_i (n-k)^{i-1},$$
where $r_i \in \{0,1,\ldots,(n-k-1)\}$. 
Further, we denote the $i$th component of $\fbeta(m)$ by $\gbeta_i(m),$ for $i=1,2,\ldots,k$. Since the $k$-length representation of a number in base $(n-k)$ is unique, $\fbeta$ and $\gbeta_{i}$ are well defined functions. Further, $\fbeta$ is invertible and its inverse is denoted by $\gbeta^{-1}$. We also use the following compressed notation for $\gbeta^{-1}$.
$$ \langle r_1,r_2,\ldots,r_k \rangle \define \gbeta^{-1}(r_1,r_2,\ldots,r_k) = \sum_{i=1}^{k} r_i (n-k)^{i-1} - 1$$

The definition of the above functions will be useful in constructing our codes.

\begin{figure}
\center
\centerline{\psfig{figure=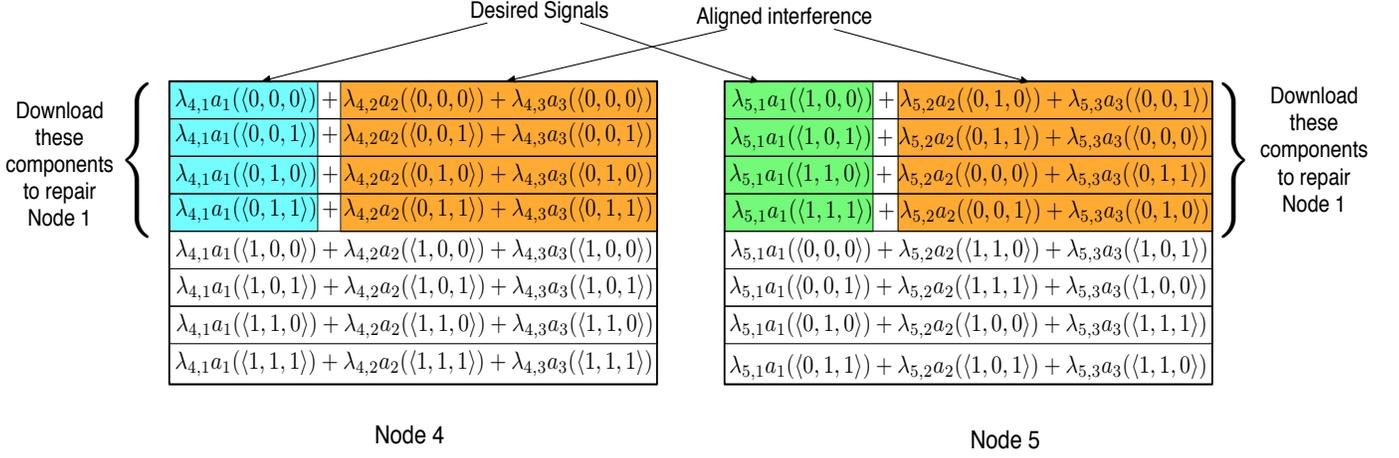,width=7.2 in,height=2.4in}}
\caption{The two parity nodes in the $(5,3)$ code and repair strategy for failure of node $1$. Shaded portions indicate downloaded portions used to recover failure of node 1. Note that the undesired symbols can be cancelled by downloading half the components of $\mathbf{a}_2, \mathbf{a}_3$, i.e., by downloading $a_2(\langle0,x_1,x_2\rangle)$ and $a_3(\langle0,x_1,x_2\rangle)$ for $x_1,x_2 \in \{0,1\}$.} 
\label{fig:permutations_alignment}
\end{figure}

\subsection{Example : n=5, k=3}
We motivate our code by first considering the case where $k=3,n=5$ for simplicity. The extension of the code to arbitrary $n,k$ will follow later\footnote{Optimal codes for $n=5,k=3$ have been proposed in \cite{Cullina_Dimakis_Ho, Suh_Ramachandran}. We only use this case to demonstrate our construction in the simplest non-trivial setting.} For $n=5,k=3$, we have $\MM/k = (n-k)^{k}=2^3=8$. As the name suggests, we use scaled permutation matrices for $\mathbf{C}_{i,j}, j \in \{1,2,\ldots,k\}, i \in \{k+1,k+2,\ldots,n\}$. Note here that the variables $\mathbf{a}_{j},j=1,2,\ldots,k$ are $(n-k)^{k} \times 1$ dimensional vectors. We represent the $(n-k)^{k}=8$ components these vectors by the $k=3$ bit representation of their indices as

$$\mathbf{a}_{j} = \left( a_j(1)~~a_j(2)~~\ldots~~a_j(8)\right)^T = \left( \begin{array}{c} a_j\left(\langle 0,0,0 \rangle\right) \\ a_j(\langle 0,0,1\rangle) \\ a_j(\langle 0,1,0\rangle) \\ a_j(\langle 0,1,1 \rangle) \\ a_j(\langle 1,0,0\rangle) \\ a_j(\langle 1,0,1\rangle) \\ a_j(\langle 1,1,0\rangle) \\ a_j(\langle 1,1,1\rangle) \end{array} \right)$$
for all $j=1,2,\ldots,k$.
Now, similarly, we can denote the identity matrix as 
$$ \mathbf{I}_{8} = \left[\begin{array}{c} \mathbf{e}(1)\\ \mathbf{e}(2)\\ \vdots \\ \mathbf{e}(8) \end{array}\right] = \left[\begin{array}{c} \mathbf{e}(\langle 0,0,0 \rangle) \\ \mathbf{e}(\langle 0,0,1 \rangle) \\ \vdots \\ \mathbf{e}(\langle 1,1,1 \rangle)\end{array}\right],$$ where, naturally, $\mathbf{e}(i)$ is the $i$th row of the identity matrix. 
Now, we describe our code as follows. Since the first three storage nodes are systematic nodes and the remaining two are parity nodes, the design parameters are $\mathbf{C}_{4,j}, \mathbf{C}_{5,j}, \mathbf{V}_{j}$ for $j=1,2,3$. We choose
$$ \mathbf{C}_{4,j}=\lambda_{4,j} \mathbf{I}$$ so that
$$ \mathbf{d}_{4} = \sum_{j=1}^{3} \lambda_{4,j}\mathbf{a}_{j},$$
where $\lambda_{4,j}$ are independent random scalars chosen using a uniform distribution over the field $\mathbb{F}_{q}$.
Now, consider the $8 \times 8$ permutation matrix $\mathbf{P}_{i}$ defined as 
$$ \mathbf{P}_{1} = \left[\begin{array}{c} \mathbf{e}(\langle 1,0,0\rangle) \\ \mathbf{e}(\langle1,0,1\rangle)\\ \mathbf{e}(\langle 1,1,0\rangle )\\ \mathbf{e}(\langle 1,1,1 \rangle )\\ \mathbf{e}(\langle 0,0,0 \rangle )\\  \mathbf{e}(\langle 0,0,1\rangle )\\ \mathbf{e}(\langle 0,1,0\rangle )\\ \mathbf{e}(\langle 0,1,1\rangle ) \end{array}\right],
\mathbf{P}_{2} = \left[ \begin{array}{c}\mathbf{e}(\langle 0,1,0\rangle )\\ \mathbf{e}(\langle 0,1,1\rangle )\\ \mathbf{e}(\langle 0,0,0\rangle )\\ \mathbf{e}(\langle 0,0,1\rangle)\\ \mathbf{e}(\langle 1,1,0\rangle )\\ \mathbf{e}(\langle 1,1,1\rangle )\\ \mathbf{e}(\langle 1,0,0\rangle )\\ \mathbf{e}(\langle 1,0,1\rangle )\end{array}\right],
\mathbf{P}_{3} = \left[ \begin{array}{c}\mathbf{e}(\langle 0,0,1\rangle )\\ \mathbf{e}(\langle 0,0,0\rangle )\\ \mathbf{e}(\langle 0,1,1\rangle)\\ \mathbf{e}(\langle 0,1,0\rangle )\\ \mathbf{e}(\langle 1,0,1\rangle )\\  \mathbf{e}(\langle 1,0,0\rangle )\\ \mathbf{e}(\langle 1,1,1\rangle )\\ \mathbf{e}(\langle 1,1,0\rangle )\end{array}\right]$$
Then, the fifth node (i.e., the second parity node) is designed as 
$$\mathbf{d}_{5} = \sum_{j=1}^{3} \lambda_{5,j} \mathbf{P}_{j}\mathbf{a}_{j},$$ where $\lambda_{5,j}$ are random independent scalars drawn uniformly over the entries of the field $\mathbb{F}_{q}$. In other words, we have 
$$\mathbf{C}_{5,j} = \lambda_{5,j} \mathbf{P}_j, j=1,2,3.$$
The code is depicted in Figure \ref{fig:permutations_alignment}. For a better understanding of the structure of the permutations, consider an arbitrary column vector  $\mathbf{a} = \left[a(1)~~a(2)~~\ldots~~a(8)\right]^T$. Then, $$\mathbf{P}_{1} \mathbf{a} = \left( \begin{array}{c} a(\langle 1,0,0\rangle) \\ a(\langle 1,0,1\rangle ) \\ a(\langle 1,1,0\rangle ) \\ a(\langle 1,1,1\rangle) \\a(\langle 0,0,0\rangle ) \\ a(\langle 0,0,1\rangle ) \\ a(\langle 0,1,0\rangle ) \\ a(\langle 0,1,1\rangle ) \end{array} \right) = \left( \begin{array}{c} a(5) \\ a(6) \\ a(7) \\ a(8) \\ a(1) \\ a(2) \\ a(3) \\ a(4) \end{array} \right)$$
In other words, $\mathbf{P}_{1}$ is a permutation of the components of $\mathbf{a}$ such that the element $a(\langle 1,x_2,x_3\rangle)$ is swapped with the element $a(\langle 0,x_2,x_3\rangle)$ for $x_2,x_3 \in \{0,1\}$. Similarly, $\mathbf{P}_{2}$ swaps $a(\langle x_1,0,x_3\rangle)$ with $a(\langle x_1,1,x_3\rangle)$ and $\mathbf{P}_{3}$ swaps $a(\langle x_1,x_2,0\rangle)$ with $a(\langle x_1,x_2,1\rangle)$ where $x_1,x_2,x_3 \in \{0,1\}$.


Now, we show that this code can be used to achieve optimal recovery, in terms of repair bandwidth, for a single failed systematic node. To see this, consider the case where node $1$ fails. Note that for optimal repair, the new node has to download a fraction of $\frac{1}{n-k}=\frac{1}{2}$ of every surviving node, i.e., nodes $2,3,4,5$. The repair strategy is to download $\mathbf{d}_{i}(\langle 0,0,0\rangle ), \mathbf{d}_{i}(\langle 0,0,1\rangle ),\mathbf{d}_{i}(\langle 0,1,0\rangle ),\mathbf{d}_{i}(\langle 0,1,1\rangle )$ from node $i \in\{2,3,4,5\}$, so that 
$$\mathbf{V}_{1} = \left[\begin{array}{c} \mathbf{e}(\langle 0,0,0\rangle ) \\ \mathbf{e}(\langle 0,0,1\rangle )\\ \mathbf{e}(\langle 0,1,0\rangle ) \\ \mathbf{e}(\langle 0,1,1\rangle )\end{array}\right] = \left[\begin{array}{c} \mathbf{e}(1) \\ \mathbf{e}(2)\\ \mathbf{e}(3) \\ \mathbf{e}(4)\end{array}\right] $$

In other words, the rows of $\mathbf{V}_{1}$ come from the set $\{\mathbf{e}(\langle 0,x_2,x_3\rangle) : x_2, x_3 \in\{0,1\}\}$. Note that the strategy downloads half the data stored in every surviving node as required. With these download vectors, it can be observed (See Figure \ref{fig:permutations_alignment}) that the interference is aligned as required and all the $8$ components of the desired signal $\mathbf{a}_{1}$ can be reconstructed. Specifically we note that 
\begin{equation} \mbox{rowspan}(\mathbf{V}_{1} \mathbf{C}_{4,i}) = \mbox{rowspan}(\mathbf{V}_{1} \mathbf{C}_{5,i}) = \mbox{span}(\{\mathbf{e}(\langle 0,x_2,x_3\rangle): x_2,x_3\in\{0,1\} \})\label{eq:alignment2} \end{equation}
for $i=2,3$:
Put differently, because of the structure of the permutations, the downloaded components can be expressed as 
$${d}_{4}(\langle 0,x_2,x_3\rangle) = \lambda_{4,1}a_1(\langle0,x_2,x_3\rangle)+ \lambda_{4,2}a_2(\langle0,x_2,x_3\rangle)+\lambda_{4,3}a_3(\langle0,x_2,x_3\rangle)  $$
$${d}_{5}(\langle 0,x_2,x_3\rangle) = \lambda_{5,1}a_1(\langle1,x_2,x_3\rangle)+ \lambda_{5,2}a_2(\langle0,x_2\oplus 1,x_3\rangle)+\lambda_{5,3}a_3(\langle0,x_2,x_3\oplus 1\rangle)  $$
Note that since $x_2,x_3 \in \{0,1\}$ there are a total $8$ components described in the two equations above, such that, all the interference is of the form $a_i(\langle 0,y_2,y_3 \rangle), i \in \{2,3\}, y_2,y_3 \in \{0,1\}$. In other words, the interference from $\mathbf{a}_{i},i=2,3$ comes from only half its components, and the interference is aligned as described in (\ref{eq:alignment2}). However, note that the $8$ components span all the $8$ components of the desired signal $\mathbf{a}_{1}$.
Thus, the interference can be completely cancelled and the desired signal can be completely reconstructed.
 
 Similarly, in case of failure of node $2$, the set of rows of the repair matrices $\mathbf{V}_{2}$ is equal to the set $\{\mathbf{e}(\langle x_1,0,x_3 \rangle) : x_1, x_3 \in \{0,1\} \}$, i.e., 
$$\mathbf{V}_{2} = \left[\begin{array}{c} \mathbf{e}(\langle 0,0,0\rangle ) \\ \mathbf{e}(\langle 0,0,1\rangle )\\ \mathbf{e}(\langle 1,0,0\rangle ) \\ \mathbf{e}(\langle 1,0,1\rangle )\end{array}\right] = \left[\begin{array}{c} \mathbf{e}(1) \\ \mathbf{e}(2)\\ \mathbf{e}(5) \\ \mathbf{e}(6)\end{array}\right] $$ 
With this set of download vectors, it can be noted that, for $i=1,3$
\begin{equation} \mbox{rowspan}(\mathbf{V}_{2} \mathbf{C}_{4,i}) = \mbox{rowspan}(\mathbf{V}_{2} \mathbf{C}_{5,i} ) = \mbox{span}(\{\mathbf{e}(\langle x_1,0,x_3\rangle): x_1,x_3\in\{0,1\} \})\label{eq:alignment1} \end{equation}
so that the interference is aligned. It can be verified that the desired signal can be reconstructed completely because of condition (\ref{eq:reconstruction}) as well.
 The rows of $\mathbf{V}_{3}$ come from the set $\{\mathbf{e}(\langle x_1, x_2, 0\rangle) : x_1, x_2 \in \{0,1\} \}.$ Equations (\ref{eq:alignment}) and (\ref{eq:reconstruction}) can be verified to be satisfied for this choice of $\mathbf{V}_{3}$ with the alignment condition taking the form this case can be verified to be satisfied, for $i=1,2$, as 
\begin{equation} \mbox{rowspan}(\mathbf{V}_{3} \mathbf{C}_{4,i}) = \mbox{rowspan}( \mathbf{V}_{3} \mathbf{C}_{5,i}) = \mbox{span}(\{\mathbf{e}(\langle x_1,x_2,0\rangle): x_1,x_2\in\{0,1\} \})\label{eq:alignment3} \end{equation} 

 While this shows that optimal repair is achieved, all the remains to be shown is that the code is an MDS code, i.e., Property \ref{property:MDS}. This is shown in Appendix \ref{app:MDS}, for the generalization of this code to arbitrary values of $(n,k)$. Next, we describe this generalization. 


\subsection{The optimal $(n,k)$ code}
This is a natural generalization of the $(5,3)$ code for general values of $(n,k)$, with $\LL = (n-k)^{k}$. 
To describe this generalization, we define function $\lbeta_{i}(m) = (\gbeta_{1}(m),\gbeta_{2}(m),,\ldots,\gbeta_{i-1}(m),\gbeta_{i}(m)\oplus 1, \gbeta_{i+1}(m), \gbeta_{i+2}(m),\ldots,\gbeta_{k}(m))$, where the operator $\oplus$ represents an addition modulo $(n-k)$.
In other words, $\lbeta_{i}(m)$ essentially modifies the $i$th position in the base $(n-k)$ representation of $m-1$, by addition of $1$ modulo $(n-k)$. 
\begin{remark} For the optimal $(5,3)$ code described previously, note that the $m$th row of $\mathbf{P}_{i}$ is $\mathbf{e}(\langle \lbeta_{i}(m) \rangle)$. In other words, for the $(5,3)$ code described above, the $m$th component of $\mathbf{P}_{i} \mathbf{a}$ is equal to $a(\langle \lbeta_{i}(m)\rangle)$.
\label{remark1}
\end{remark}
\begin{remark}
 $\langle \lbeta_i(1)\rangle, \langle \lbeta_i(2)\rangle, \ldots, \langle\lbeta_i((n-k)^{k})\rangle$ is a permutation of $1,2,\ldots,(n-k)^{k}$ for any $i \in \{1,2,\ldots,k\}$. Therefore, given a $\LL \times 1$ vector $\mathbf{a}$,  
$$\left[a\big(\langle \lbeta_i(1)\rangle\big), a\big(\langle \lbeta_i(2)\rangle\big), \ldots, a\big(\langle\lbeta_i((n-k)^{k})\rangle \big)\right]^T$$ is a permutation of $\mathbf{a}$. We will use this permutation to construct our codes.
\label{remark2}
\end{remark}

In this code, we have $\LL = \MM/k = (n-k)^{k}$, so that the $k$ sources, $\mathbf{a}_{1}, \mathbf{a}_{2},\ldots, \mathbf{a}_{k}$ are all $(n-k)^{k} \times 1$ vectors and the coding sub-matrices are $(n-k)^k \times (n-k)^{k}$ matrices. 

Consider the permutation matrix $\mathbf{P}_{i}$ defined as 
\begin{equation} \mathbf{P}_{i} = \left(\begin{array}{c} \mathbf{e}\big(\langle {\lbeta}_{i}(1)\rangle\big) \\ \mathbf{e}\big(\langle{\lbeta}_{i}(2)\rangle\big) \\ \vdots \\\mathbf{e}\big(\langle{\lbeta}_{i}((n-k)^{k})\rangle\big)  \end{array}\right)\label{eq:Permutation}\end{equation}
for $i=1,2, \ldots, k$, where $\mathbf{e}(1), \mathbf{e}(2),\ldots, \mathbf{e}((n-k)^{k})$ are the rows of the identity matrix $\mathbf{I}_{(n-k)^{k}}$. Note that because of Remark \ref{remark2}, the above matrix is indeed a permutation matrix. Then, the coding sub-matrices are defined as 
$$ \mathbf{C}_{j,i} = \lambda_{j,i} \mathbf{P}_i^{j-k-1}.$$
Thus, to understand the structure of the above permutation, consider an arbitrary column vector $$\mathbf{a} = \left(a(1)~~a(2)~~\ldots a((n-k)^{k}) \right)^{T}.$$ Then, let $j = \langle x_1,x_2,x_3,\ldots,x_{k})\rangle$ for $1 \leq j \leq (n-k)^{k}$. Then, the $j$th component of $\mathbf{P}_{i} \mathbf{a}$ is $$a(\langle (x_1, x_2, \ldots, x_{i-1}, x_{i}\oplus 1, x_{i+1},\ldots,x_{k})\rangle).$$ Thus, we can write
\begin{eqnarray*} d_{k+r+1}(\langle x_1,x_2,\ldots,x_k\rangle) &=& \lambda_{k+r+1,1} a_1(x_1\oplus r, x_2,x_3,\ldots,x_k) + \lambda_{k+r+1,2} a_2(x_1, x_2\oplus r,x_3,\ldots,x_k) \\
\\ &&+ \ldots + \lambda_{k+r+1,k} a_k(x_1, x_2,x_3,\ldots,x_k\oplus r) 
\end{eqnarray*}
where $r \in \{0,1,2,\ldots, n-k-1\}$. This describes the coding sub-matrices.

Now, in case of failure of node $l$, the rows of the repair matrices $\mathbf{V}_{l}$ are chosen from the set $\{ \mathbf{e}(m): \gbeta_{l}(m) = 0\}$. Since $\gbeta_{l}(m)$ can take $n-k$ values, this construction has $\frac{\LL}{k} = (n-k)^{k-1}$ rows for $\mathbf{V}_{l}$ as required. Because of the construction, we have the following interference alignment relation for $i \neq l, j \in \{k+1,k+2,\ldots,n\}$
$$ \mbox{rowspan}(\mathbf{C}_{j,i} \mathbf{V}_{l}) = \mbox{rowspan}(\{\mathbf{e}(m):\gbeta_{l}(m)=0\}).$$
Further, 
$$ \mbox{rowspan}(\mathbf{C}_{j,l} \mathbf{V}_{l}) = \mbox{rowspan}(\{\mathbf{e}(m):\gbeta_{l}(m)=j-k-1\}).$$
for $j \in \{k+1,k+2,\ldots,n\}$ so that (\ref{eq:reconstruction}) is satisfied and the desired signal can be reconstructed from the interference.
All that remains to be shown is the MDS property. This is shown in Appendix \ref{app:MDS}.

\section{Explicit Construction of Codes for $n-k \in \{2,3\}$}
\label{sec:explicit}
While, theoretically, any $(n,k)$ MDS code could be used to build distributed storage systems, in practice, the case of having a small number of parity nodes, i.e. small values of $n-k$, is especially of interest. In fact, a significant portion of literature on use of codes for storage systems is devoted to building codes for the cases of $(n-k)\in \{2,3\}$ with desirable properties (See, for example, \cite{EVENODD, LIBERATION, RDP, STAR}). While these references focused on constructing MDS codes with efficient encoding and decoding properties, here, we study the construction of MDS codes for $n-k \in \{2,3\}$ with desirable repair properties. 

In the previous section, we provided \emph{random} code constructions based on permutation matrices. In this section, we further strengthen our constructions by providing \emph{explicit} code constructions for the important case of $n-k \in \{2,3\}$. Note that the codes constructed earlier were random constructions because of the fact that scalars $\lambda_{j,i}$ were picked randomly from the field. Further, note that, as long as $\lambda_{j,i}, j=k+1,k+2,\ldots,n, i=1,2,\ldots,k$ are \emph{any} set of non-zero scalars, the repair bandwidth for failure of a single systematic node is $\frac{n-1}{n-k}$ units as required. The randomness of the scalars $\lambda_{j,i}$ was used in the previous section to show the existence of codes which satisfy the MDS property. In this section, for the two cases of $n-k=2$ and $n-k=3$, we choose these scalars explicitly (i.e., not randomly) so that the MDS property is satisfied. For both cases, the scalars $\lambda_{i,j}$ are chosen as 
\begin{eqnarray}
\label{eq:2parity}
\lambda_{j,i} = \lambda_{j}^{i-1}
\end{eqnarray}
so that we have 
$$ \mathbf{C}_{j,i} = (\lambda_{j} \mathbf{P}_{j})^{i-1}$$
for $j=\{1,2,\ldots,n-k\}, i=1,2\ldots,k$.  

If $n-k=2$, we choose $q\geq (2k+1)$ and choose non-zero scalars $\lambda_{1},\lambda_{2},\ldots,\lambda_{k}$ from the field so that 
$$\lambda_{i}\neq \lambda_{j}, \lambda_{i} + \lambda_{j} \neq 0, \mbox{for }i \neq j.$$ Note that in a field of size $(2k+1)$ or bigger, scalars satisfying the above can be chosen by ensuring that $\lambda_{i}^{'} + \lambda_{i} = 0 \Rightarrow \lambda_{i}^{'} \notin \{\lambda_{1}, \lambda_{2},\ldots,\lambda_{k}\}$. . With this choice of scalars, in Appendix \ref{app:explicit}, we show that the code satisfies the MDS property. 

For $n-k=3$, we choose $\lambda_{1},\lambda_{2}, \ldots, \lambda_{k}$ to be $k$ non-zero elements in the field $\mathbb{F}_{q},$ where $q \geq 2k+1$ is a prime, so that 
\begin{equation} \lambda_{i} \neq \lambda_{j}, \lambda_{i}+\lambda_{j} \neq 0 
\label{eq:3parity}\end{equation}
for all $i \neq j, i,j \in \{1,2,\ldots,k\}$. Note that elements $\lambda_{i}$ satisfying the above conditions can be chosen satisfying the above properties if $q \geq 2k+1.$ In Appendix \ref{app:explicit}, we also show that the code described here for the case of $n-k =3$ is an MDS code.

\section{A Subspace Interference Alignment Framework for Optimal Repair}
\label{sec:Tensors}
In this section, we return to Problem 2 described in Section \ref{sec:problem}. Before we consider this problem, we summarize some properties of tensor (Kronecker) products below; the notation $\otimes$ is used to denote the tensor (Kronecker) product between two matrices.
\begin{itemize}
\item Mixed Product Property: $$(\mathbf{P}_{1} \otimes \mathbf{P}_{2} \ldots \otimes \mathbf{P}_{m}) (\mathbf{Q}_{1} \otimes \mathbf{Q}_{2} \ldots \otimes \mathbf{Q}_{m}) = (\mathbf{P}_{1} \mathbf{Q}_{1}) \otimes (\mathbf{P}_{2}\mathbf{Q}_{2})  \ldots \otimes (\mathbf{P}_{m}\mathbf{Q}_{m}) $$
\item  Invariance w.r.t span:\\
\emph{If all the factors of a tensor product align with the corresponding factors of another tensor product, then the corresponding products also align, and vice-versa.}
Formally, let $\mathbf{P}_{i},\mathbf{Q}_{i},i=1,2,\ldots,m$ be matrices such that the dimension of $\mathbf{P}_{i}$ is equal to the dimension of $\mathbf{Q}_{i}$. Then,  $\mbox{rowspan}(\mathbf{P}_{i}) = \mbox{rowspan}(\mathbf{Q}_{i}) \neq \{\mathbf{0}\} , i=1,2,\ldots,m$, if and only if
$$\mbox{rowspan}(\mathbf{P}_{1} \otimes \mathbf{P}_{2} \ldots \otimes \mathbf{P}_{m}) = \mbox{rowspan}(\mathbf{Q}_{1} \otimes \mathbf{Q}_{2} \ldots \otimes \mathbf{Q}_{m}),$$
where $\mathbf{0}$ represents the row vector whose entries are all equal to $0$. \item Inheritance of linear independence: \\
\emph{If the rows of one of the factors of a tensor product is linearly independent of the rows of the corresponding factor in another tensor product, then the rows of the corresponding products are also linearly independent.} More formally, let $\mathbf{P}_{i},\mathbf{Q}_{i},i=1,2,\ldots,m$ be matrices such that the dimension of $\mathbf{P}_{i}$ is equal to the dimension of $\mathbf{Q}_{i}$. Now, suppose that $\mbox{rowspan}(\mathbf{P}_{l}) \cap \mbox{rowspan}(\mathbf{Q}_{l})=\{\mathbf{0}\}$ for some $l \in \{1,2,\ldots,m\},$ i.e., each row of $\mathbf{P}_{l}$ is linearly independent of all the rows of $\mathbf{Q}_{l}$ for some $l\in\{1,2,\ldots,m\}$. Then, 
$$\mbox{rowspan}(\mathbf{P}_{1} \otimes \mathbf{P}_{2} \otimes \ldots \otimes \mathbf{P}_{m} )\cap \mbox{rowspan}(\mathbf{Q}_{1} \otimes \mathbf{Q}_{2} \otimes \ldots \otimes \mathbf{Q}_{m}) = \{\mathbf{0}\}$$
\end{itemize}
The second and third properties above follow as a result of bilinearity and associativity of tensor products. 
The above properties were used in \cite{Suh_Tse_subspace} to develop a type of interference alignment called \emph{subspace interference alignment} in the context of cellular networks. In subspace interference alignment, the property of the invariance of tensor products w.r.t. span plays a central role in ensuring that interference aligns, and the inheritance of linear independence property plays a central role in ensuring that desired signals are linearly independent of the interference. This intuition recurs in our application of the concept here. We apply this idea of subspace interference alignment in the context of the repair problem - specifically, we use the idea of subspace interference alignment in the context of Problem $2$ in Section \ref{sec:problem}. We use $N=3$ here to demonstrate the main idea - the framework developed here can be used to solve the problem for any $N \in \mathbb{N}$. For the convenience of the reader, equations (\ref{eq:1a})-(\ref{eq:3a}) associated with the problem are restated (albeit in a slightly different, but equivalent, form) here. 
\begin{eqnarray}
\mbox{rowspan} (\mathbf{V}_{i} \mathbf{H}_{j}) &=& \mbox{rowspan}({\mathbf{V}_i}), j\in\{1,2,3\}-\{i\}\label{eq:1b}\\
 \mbox{rowspan}(\mathbf{V}_{i}) \cap \mbox{rowspan}(\mathbf{V}_{i}\mathbf{H}_{i}) &=& \{\mathbf{0}\}\label{eq:2b}\\
\mbox{rank}(\mathbf{V}_{i}) = \mbox{rank}(\mathbf{H}_{i})/2 &=& L/2\label{eq:3b}
\end{eqnarray}
where $\mathbf{0}$ is the $1 \times L$ row vector of zeros. To recollect, as shown in Section \ref{sec:connections}, a solution to the above problem can lead to an $n=N+2,k=N$ code by choosing coding sub-matrices $\mathbf{C}_{k+1,i}=\mathbf{I}_{L}$ and $\mathbf{C}_{k+2,i} = \mathbf{H}_{i}$ for $i=1,2,\ldots,k$.

\subsection{Simplifying the above problem}

In the remainder of this section, we will use the properties of the tensor products listed to simplify the above problem to the following: Find $\mathbf{U}_{0},\mathbf{G}_{0},\mathbf{G}_{1}$ such that
\begin{eqnarray}
\mbox{rowspan} (\mathbf{U}_{0} \mathbf{G}_{0}) &=& \mbox{rowspan}({\mathbf{U}_0}), \label{eq:1simple}\\
 \mbox{rowspan}(\mathbf{U}_{0}) \cap \mbox{rowspan}(\mathbf{U}_{0}\mathbf{G}_{1}) &=& \{\mathbf{0}\}\label{eq:2simple}
\end{eqnarray}
where $\mathbf{U}_{0}$ is a $1 \times 2$ row vector, $\mathbf{G}_{0},\mathbf{G}_{1}$ are $2 \times 2$ matrices and $\mathbf{0}$ is a $1 \times 2$ vector of zeros.  In other words, the problem finding $N$ matrices $\mathbf{V}_{i},\mathbf{H}_{i},i=1,2,\ldots,N$ satisfying all the relations represented in (\ref{eq:1b}),(\ref{eq:2b}) can be simplified into finding three matrices $\mathbf{U}_{0},\mathbf{G}_{0},\mathbf{G}_{1}$ satisfying (\ref{eq:1simple}),(\ref{eq:2simple}). Note that finding $\mathbf{U}_{0},\mathbf{G}_{0},\mathbf{G}_{1}$ satisfying the above is straightforward, and the eigen-vector approach used in Problem $1$ in Section \ref{sec:problem} works, i.e., we can pick the matrices so that $\mathbf{G}_{0}^{T}$ and $\mathbf{G}_{1}^{T}$ do not have a common eigen vector and pick $\mathbf{U}_{0}^{T}$ to be an eigen vector of $\mathbf{G}_{0}^{T}$.  As we show next, we use tensor products to ``stitch together'' $N$ independent instances of the simpler problem of satisfying (\ref{eq:1simple}),(\ref{eq:2simple}), to find matrices satisfying (\ref{eq:1b})-(\ref{eq:3b}).

In our solution to (\ref{eq:1b})-(\ref{eq:3b}), we have $L=2^{N}=8$. Suppose we restrict the matrices (\ref{eq:1b})-(\ref{eq:3b}) to have the following structure.
$$\mathbf{H}_{i} = \lambda_{i} (\mathbf{G}_{i,1} \otimes \mathbf{G}_{i,2} \otimes \mathbf{G}_{i,3})$$
where $\mathbf{G}_{i,j}$ is a $2 \times 2$ full rank matrix for $j=1,2,3, i=1,2,3$ and $\lambda_{i}$ is some non-zero scalar over the field $\mathbb{F}_{q}$. Note that $\mathbf{H}_{i}$ has a dimension of $8 \times 8$ and a rank of $8$ as required, with the full rank property coming from the fact that $\mathbf{G}_{i,j},j=1,2,3$ each has a rank of $2$. 
We also choose
$$\mathbf{V}_{i} = \mathbf{U}_{i,1} \otimes \mathbf{U}_{i,2} \otimes \mathbf{U}_{i,3}$$ 
where $\mathbf{U}_{1,1},\mathbf{U}_{2,2},\mathbf{U}_{3,3}$ are $1 \times 2$ row vectors. $\mathbf{U}_{i,j}$ for $i \neq j$ are $2 \times 2$ matrices having a full rank of $2$. Note that, with this choice of dimensions, $\mathbf{V}_{i}$ have a dimension of $4 \times 8$ as required. Now, we intend to choose matrices $\mathbf{G}_{i,j},\mathbf{U}_{i,j}, i,j \in\{1,2,3\}$ to satisfy (\ref{eq:1b}) and (\ref{eq:2b}). We choose these matrices to satisfy 
\begin{eqnarray} \mbox{rowspan}(\mathbf{U}_{i,i} \mathbf{G}_{j,i}) = \mbox{rowspan}(\mathbf{U}_{i,i}), i \neq j \label{eq:a}\\
\mbox{rowspan}(\mathbf{U}_{i,i} \mathbf{G}_{i,i}) \cap \mbox{rowspan}(\mathbf{U}_{i,i})=\{\mathbf{0}\}\label{eq:b}\end{eqnarray}
for $i,j=1,2,3$. In other words, the $1 \times 2$ row vector $\mathbf{U}_{i,i}$ is invariant w.r.t $\mathbf{G}_{j,i},j\neq i$ but is linearly independent of $\mathbf{U}_{i,i}\mathbf{G}_{i,i}$. The $2 \times 2$ matrices $\mathbf{U}_{i,j},i \neq j$ can be chosen to be arbitrary full rank matrices. Equation (\ref{eq:a}) ensures that (\ref{eq:1b}) is satisfied by using the invariance of tensor prodcts w.r.t. span, i.e., by ensuring that each of the $N=3$ factors on the left hand side of (\ref{eq:1b}) align with the space spanned by the corresponding factor on the right hand side. To see this, note the following.
\begin{eqnarray*}
\mbox{rowspan}(\mathbf{V}_{1} \mathbf{H}_{2}) &=& \mbox{rowspan}\left((\mathbf{U}_{1,1} \otimes \mathbf{U}_{1,2}\otimes \mathbf{U}_{1,3})(\mathbf{G}_{2,1}\otimes \mathbf{G}_{2,2} \otimes \mathbf{G}_{2,3})  \right)
\\&\stackrel{(a)}{=}& \mbox{rowspan}(\mathbf{U}_{1,1} \mathbf{G}_{2,1} \otimes \mathbf{U}_{1,2} \mathbf{G}_{2,2} \otimes \mathbf{U}_{1,3} \mathbf{G}_{2,3}) \\
&\stackrel{(b)}{=}& \mbox{rowspan}(\mathbf{U}_{1,1}\mathbf{G}_{2,1} \otimes \mathbf{U}_{1,2} \otimes \mathbf{U}_{1,3} \mathbf{G}_{2,3}) \\
&\stackrel{(c)}=& \mbox{rowspan}(\mathbf{U}_{1,1} \otimes \mathbf{U}_{1,2} \otimes \mathbf{U}_{1,3}) \\
&=& \mbox{rowspan}(\mathbf{V}_{1})
\end{eqnarray*}
$(a)$ follows from the Mixed Product Property of tensor products. $(b)$ follows from (\ref{eq:a}) and the invariance of the tensor product w.r.t. span. Similarly $(c)$ follows from invariance of tensor products w.r.t span, and the fact that $\mbox{rowspan}(\mathbf{U}_{i,j})=\mbox{rowspan}(\mathbf{U}_{i,j}\mathbf{G}_{m,n})$ for $i \neq j,$ which in turn, follows from the fact that $\mathbf{U}_{i,j}$ and $\mathbf{G}_{m,n}$ are full rank matrices for $i \neq j$.  Thus, (\ref{eq:a}) ensures that (\ref{eq:1b}) is satisfied for all $i=1,2,\ldots,N$. 

Similarly, we show below that as long as (\ref{eq:b}) holds, equation (\ref{eq:2b}) is satisfied because of the inheritance of linear independence property of tensor products.
\begin{eqnarray*}
\mbox{rowspan}(\mathbf{V}_{1} \mathbf{H}_{1}) &=& \mbox{rowspan}\left((\mathbf{U}_{1,1} \otimes \mathbf{U}_{1,2}\otimes \mathbf{U}_{1,3})(\mathbf{G}_{1,1}\otimes \mathbf{G}_{1,2} \otimes \mathbf{G}_{1,3})  \right)\\
&{=}& \mbox{rowspan}(\mathbf{U}_{1,1} \mathbf{G}_{1,1} \otimes \mathbf{U}_{1,2} \mathbf{G}_{1,2} \otimes \mathbf{U}_{1,3} \mathbf{G}_{1,3}) 
\\&{=}& \mbox{rowspan}(\mathbf{U}_{1,1} \mathbf{G}_{1,1} \otimes \mathbf{U}_{1,2} \otimes \mathbf{U}_{1,3}) \\
\Rightarrow  \mbox{rowspan}(\mathbf{V}_{1} \mathbf{H}_{1}) \cap \mbox{rowspan}(\mathbf{V}_{1}) &=&\mbox{rowspan}(\mathbf{U}_{1,1} \mathbf{G}_{1,1} \otimes \mathbf{U}_{1,2} \otimes \mathbf{U}_{1,3}) \cap \mbox{rowspan}(\mathbf{U}_{1,1}  \otimes \mathbf{U}_{1,2} \otimes \mathbf{U}_{1,3} )\\
&=& \{\mathbf{0}\}
\end{eqnarray*}
where the final equation follows from (\ref{eq:b}) and the inheritance of linear independence into tensor products. Now, we have reduced the task of finding matrices satisfying (\ref{eq:1b})-(\ref{eq:3b}) to finding matrices satisfying (\ref{eq:a})-(\ref{eq:b}). Suppose we set 
$$\mathbf{U}_{1,1}=\mathbf{U}_{2,2}=\mathbf{U}_{3,3}=\mathbf{U}_{0}$$
$$\mathbf{G}_{1,1}=\mathbf{G}_{2,2}=\mathbf{G}_{3,3}=\mathbf{G}_{1}$$
$$\mathbf{U}_{i,j}=\mathbf{U}_{1}, \mathbf{G}_{i,j}=\mathbf{G}_{0},i\neq j$$
Now equations (\ref{eq:a}),(\ref{eq:b}) essentially boil down to finding matrices
\begin{equation}\mbox{rowspan}(\mathbf{U}_{0} \mathbf{G}_{0})=\mbox{rowspan}(\mathbf{U}_{0})\label{eq:a1}\end{equation}
\begin{equation}\mbox{rowspan}(\mathbf{U}_{0} \mathbf{G}_{1})\cap \mbox{rowspan}(\mathbf{U}_{0}) = \{\mathbf{0}\}\label{eq:b1}\end{equation}
$\mathbf{U}_{1}$ can be any full rank $2 \times 2$ matrix. Thus, the simplification of Problem $2$ of Section \ref{sec:problem} is complete (at least, for $N=3$). As discussed before, the eigen vector approach illustrated for Problem $1$ in Section \ref{sec:problem} suffices to finding $\mathbf{U}_{0},\mathbf{G}_{0},\mathbf{G}_{1}$ satisfying the above relations. In fact, to obtain the $(5,3)$ permutations-based coding sub-matrices described previously, we choose 
$$\mathbf{U}_{0} = \left(1~~0\right)$$
$$\mathbf{G}_{1} = \left(\begin{array}{cc}0 & 1 \\ 1 & 0 \end{array} \right)$$
$$\mathbf{U}_{1}=\mathbf{G}_{0} = \mathbf{I}_{2}, i,j \in \{1,2,3\}, i \neq j$$
It can be noticed that the matrices $\mathbf{U}_{0},\mathbf{G}_{1},\mathbf{U}_{1}$ satisfy (\ref{eq:a})-(\ref{eq:b}). Further, in general, any choice of matrices which satisfy (\ref{eq:1b})-(\ref{eq:3b}), and hence (\ref{eq:a1}),(\ref{eq:b1}) would solve problem $2$, and hence, can be used for codes with optimal repair bandwidth for distributed storage, for $n-k=2$. For example, we could alternately  the matrices inspired by ergodic alignment \cite{Nazer_Gastpar_Jafar_Vishwanath}. These matrices are shown below.
$$\mathbf{U}_{0} = \left(1~~-1\right)$$
$$\mathbf{G}_{1} = \left(\begin{array}{cc}1&0 \\ 0 & -1\end{array}\right)$$
$$\mathbf{G}_{0} = \mathbf{I}_{2}, i,j \in \{1,2,3\}, i \neq j$$
and $\mathbf{U}_{1}$ to be any arbitrary full rank $2\times 2$ matrix. In fact, this choice of matrices has been studied for efficiently repairable code constructions in \cite{Dimitris_Dimakis_Hadamard}.
\subsection{Discussion}
\begin{itemize}
\item For $N=3$, we used $L=2^{3}$ and expressed $\mathbf{H}_{i}$ as a Kronecker product of $N=3$ matrices. For an arbitrary $N,$ we can extend the above framework by expressing $\mathbf{H}_{i}$ as a Kronecker product of $N$ $2 \times 2$ matrices so that $L=2^{N}$. $\mathbf{V}_{i}$ is also, similarly, a Kronecker product of $N$ matrices, such that the $i$th matrix is a $1 \times 2$ matrix, and the remaining $N-1$ matrices participating in the Kronecker product, are $2 \times 2$ matrices. 

\item Because of Section \ref{sec:connections}, the subspace interference alignment framework here can be used to generate $(k+2,k)$ codes which can be repaired by downloading $1/2$ the data stored in every surviving node. This is because equation (\ref{eq:1b}) ensures that the interference is aligned, and (\ref{eq:2b}) ensures that the lost (desired) symbols can be reconstructed from the downloaded data. However, this framework does not ensure Property \ref{property:MDS}, i.e., it does not ensure that the code generated is MDS. The MDS property can be ensured by choosing the scalars $\lambda_{i}$ randomly over the field and using the Schwartz-Zippel Lemma along the same lines as the proof in Appendix \ref{app:MDS}. In other words, the Schwartz-Zippel Lemma ensures that there exist at least one choice of scalars $\lambda_{i}$ so that the code is an MDS code.

\item The problems motivated in Section \ref{sec:problem} and solved in this section are related to optimal repair of failed systematic nodes in a distributed storage system with $n-k=2$ parity nodes. In general, if $n-k>2,$ the framework developed here can be used to show that the problem of finding repair-bandwidth optimal MDS codes can be decomposed into the problem of finding full rank $(n-k)\times(n-k)$ matrices $\mathbf{G}_{0},\mathbf{G}_{1},\ldots,\mathbf{G}_{n-k-1}$ and $1 \times (n-k)$ dimensional row vector $\mathbf{U}_{0}$ such that 
\begin{eqnarray} \mbox{rowspan}(\mathbf{U}_{0} \mathbf{G}_{0}) &=& \mbox{rowspan}(\mathbf{U}_{0})\label{eq:gen_simplified1}\\
\mbox{rank}\left(\left[\begin{array}{c}\mathbf{U}_{0} \\ \mathbf{U}_{0} \mathbf{G}_{1}\\ \vdots\\\mathbf{U}_{0}\mathbf{G}_{n-k-1}\end{array} \right]\right) &=& n-k\label{eq:gen_simplified2}
 \end{eqnarray}
With a solution to the above problem, the coding sub-matrices can be chosen as $\mathbf{C}_{k+1,j}=\mathbf{I}$ and for $m > 1,$
$$\mathbf{C}_{k+m,j} = \underbrace{\mathbf{G}_{0} \otimes \ldots \otimes \mathbf{G}_{0}}_{(j-1)\mbox{ times}} \otimes  \mathbf{G}_{m-1} \otimes \underbrace{\mathbf{G}_{0} \otimes \ldots \otimes \mathbf{G}_{0}}_{k-j \mbox{ times}}.$$
The repair matrices $\mathbf{V}_{j}$ can be obtained as 
$$\mathbf{V}_{j} = \underbrace{\mathbf{U}_{1} \otimes \ldots \otimes \mathbf{U}_{1}}_{j-1\mbox{ times}} \otimes  \mathbf{U}_{0} \otimes \underbrace{\mathbf{U}_{1} \otimes \ldots \otimes \mathbf{U}_{1}}_{k-j\mbox{ times}}$$
where $\mathbf{U}_{1}$ is a full rank $(n-k) \times (n-k)$ matrix. The permutation matrices used for code development in Section \ref{sec:Permutation} here can be interpreted as one solution to equations (\ref{eq:gen_simplified1})-(\ref{eq:gen_simplified2}).
\item It is worth noting that the framework developed in this section can be used to generate that codes are optimal from the perspective of the repair bandwidth. However, the codes developed need not be optimal from the perspective of the amount of disk access in the storage system. The codes of Section \ref{sec:Permutation} which fit within this framework, satisfy the additional property of being optimal from the perspective of the amount of disk access.
\end{itemize}

\section{Conclusion}
In this paper, we construct class of MDS codes based on subspace interference alignment with optimal repair bandwidth for a single failed systematic node. A class of our code constructions are optimal, not only in terms of repair bandwidth, but also in terms of the amount of disk access during the recovery of a single failed node. Since we effectively provide the first set of repair-bandwidth optimal MDS codes for arbitrary $(n,k)$, this work can be viewed as a stepping stone towards implementation of MDS codes in distributed storage systems.


From the perspective of storage systems, there remain several unanswered questions. First, there remains open the existence of finite codes which can achieve more efficient repair of parity nodes as well, along with systematic nodes. Second, we assume that the new node connects to \emph{all} $d=n-1$ surviving nodes in the system. An interesting question is whether finite code constructions can be found to conduct efficient repair when the new node is restricted to connect to a subset of the surviving nodes. While asymptotic constructions satisfying the lower bounds have been found for both these problems, the existence of finite codes satisfying these properties remain open. Finally, the search for repair strategies of existing codes, which is analogous to the search of interference alignment beamforming vectors for fixed channel matrices in the context of interference channels, remains open. While iterative techniques exist for the wireless context \cite{Gomadam_Cadambe_Jafar, Peters_Heath}, they cannot be directly extended to the storage context because of the discrete nature of the optimization problem in the latter context. Such algorithms, while explored in the context of certain classes of codes in \cite{Dimakis_EVENODD, RDP_repair}, remain an interesting area of future work.



\appendices
\section{MDS Property}
\label{app:MDS}
We intend to show that the determinant of the matrix in (\ref{eq:MDS}) is a non-zero polynomial in $\Lambda=\{\lambda_{j,i}, j=k+1,k+2, \ldots, n, i=1,2,\ldots,k\}$ for any $j_1,j_2,\ldots, j_k \in \{1,2,\ldots,n\}$. If we show this, then, each MDS constraint corresponds to showing that a polynomial $p_{j_1,j_2,\ldots,j_k}(\Lambda)$ is non-zero. Using the Schwartz-Zippel Lemma on the product of these polynomials $\pi_{j_1,j_2,\ldots,j_k} p_{j_1,j_2,\ldots,j_k}(\Lambda)$ automatically implies the existence of $\Lambda$ so that the MDS constraints are satisfied, in a sufficiently large field. Therefore, all that remains to be shown is that the determinant of (\ref{eq:MDS}) is a non-zero polynomial in $\Lambda$. We will show this by showing that there exists at least one set of values for the variables $\Lambda$ such that the determinant of (\ref{eq:MDS}) is non-zero. To show this, we first assume, without loss of generality, that $j_1, j_2, \ldots, j_k$ are in ascending order. Also, let $j_1, j_2, \ldots, j_{k-m} \in \{1,2,\ldots, k\}$ and $j_{k-m+1}, j_{k-m+2}, \ldots, j_{k} \in \{k+1,k+2,\ldots,n\}$. For simplicity we will assume that $j_1 =1, j_2=2, \ldots, j_{k-m} = k-m$. The proof for any other set $\{ j_1, j_2,\ldots,j_{k-m}\}$ is almost identical to this case, except for a difference in the indices used henceforth. Substituting the appropriate values of $\mathbf{C}_{j,i}$, the matrix in (\ref{eq:MDS}) can be written as 

\begin{equation}
\left[ \begin{array}{ccccc}
\mathbf I& \ldots & \mathbf 0& \ldots &\mathbf 0\\
\mathbf 0& \ldots & \mathbf 0& \ldots &\mathbf 0\\
\vdots & \ddots & \vdots & \ddots & \vdots \\
\mathbf 0& \ldots & \mathbf I& \ldots &\mathbf 0\\
\mathbf\lambda_{j_{k-m+1},1}\mathbf{P}_{1}^{s_{k-m+1}}& \ldots &\lambda_{j_{k-m+1},k-m}\mathbf{P}_{k-m}^{s_{k-m+1}}& \ldots & \lambda_{j_{k-m+1},k}\mathbf{P}_{k}^{s_{k-m+1}}\\
\vdots& \vdots & \vdots & \ddots & \vdots\\
\lambda_{j_{k},1}\mathbf{P}_{1}^{s_k} & \ldots & \lambda_{j_k,k-m}\mathbf{P}_{k-m}^{s_{k}}& \ldots & \lambda_{j_k,k-m}\mathbf{P}_{k}^{s_k} \end{array} \right] \label{eq:MDSproof}
\end{equation}
where $s_{i}=j_{i}-k-1$. 
Now, if $$\lambda_{j,i} = \left\{\begin{array}{cc}0 & \mbox{if }(j,i) \notin \{(j_t,t) : t=k-m+1, k-m+2,\ldots,k \} \\ 1 & \mbox{otherwise} \end{array}\right\}$$ then the above matrix is a block diagonal matrix. Therefore, its determinant evaluates to the product of the determinant of its diagonal blocks, i.e., $\displaystyle\prod_{u=k-m+1}^{k}|\mathbf{P}_{u}^{s_u}|$ which is non-zero. This implies that the determinant in (\ref{eq:MDS}) is a non-zero polynomial in $\Lambda$ as required. This completes the proof. 

\section{Proof of MDS property for explicit constructions of Section \ref{sec:explicit}}
\label{app:explicit}
We need to show Property \ref{property:MDS}. Before we show this property, we begin with the following Lemma which shows that the coding submatrices in our constructions commute. 
\begin{lemma}
\label{lemma:commute}
$$\mathbf{P}_{i}^{m_1}\mathbf{P}_{j}^{m_2} = \mathbf{P}_{j}^{m_2}\mathbf{P}_{i}^{m_1}$$
where $\mathbf{P}_{i}$ is chosen as in (\ref{eq:Permutation}).
\end{lemma}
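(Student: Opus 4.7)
The plan is to unpack the definition of $\mathbf{P}_i$ in equation (\ref{eq:Permutation}), generalize Remark \ref{remark1} from the $(5,3)$ case to arbitrary $(n,k)$, and then observe that different $\mathbf{P}_i$'s act on disjoint coordinates of the base-$(n-k)$ digit representation of the index.

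First, I would generalize Remark \ref{remark1}: for arbitrary $(n,k)$ and any $\LL \times 1$ vector $\mathbf{a}$, the definition of $\mathbf{P}_i$ gives
\begin{equation*}
(\mathbf{P}_i \mathbf{a})(m) = a(\langle \lbeta_i(m) \rangle) = a(\langle \gbeta_1(m), \ldots, \gbeta_{i-1}(m), \gbeta_i(m) \oplus 1, \gbeta_{i+1}(m), \ldots, \gbeta_k(m) \rangle),
\end{equation*}
i.e., $\mathbf{P}_i$ permutes the entries of $\mathbf{a}$ by incrementing (mod $n-k$) the $i$th digit of the base-$(n-k)$ expansion of the index. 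A straightforward induction on $t$ then shows that for any nonnegative integer $t$,
\begin{equation*}
(\mathbf{P}_i^{t} \mathbf{a})(\langle x_1, \ldots, x_k \rangle) = a(\langle x_1, \ldots, x_{i-1}, x_i \oplus t, x_{i+1}, \ldots, x_k \rangle),
\end{equation*}
where $\oplus$ is addition modulo $(n-k)$.

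The case $i = j$ of the lemma is immediate: $\mathbf{P}_i^{m_1} \mathbf{P}_i^{m_2} = \mathbf{P}_i^{m_1 + m_2} = \mathbf{P}_i^{m_2} \mathbf{P}_i^{m_1}$. For $i \neq j$, applying the displayed formula twice, first to the vector $\mathbf{P}_i^{m_1} \mathbf{a}$ and then in the reversed order to $\mathbf{P}_j^{m_2} \mathbf{a}$, shows that both $\mathbf{P}_j^{m_2} \mathbf{P}_i^{m_1} \mathbf{a}$ and $\mathbf{P}_i^{m_1} \mathbf{P}_j^{m_2} \mathbf{a}$ have $\langle x_1, \ldots, x_k \rangle$-th entry equal to
\begin{equation*}
a(\langle x_1, \ldots, x_{i-1}, x_i \oplus m_1, x_{i+1}, \ldots, x_{j-1}, x_j \oplus m_2, x_{j+1}, \ldots, x_k \rangle),
\end{equation*}
since the two modular additions act on distinct coordinates of the digit tuple and therefore commute. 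As this equality holds for every vector $\mathbf{a}$, the matrices $\mathbf{P}_i^{m_1}$ and $\mathbf{P}_j^{m_2}$ commute.

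There is no real obstacle in this proof; the only care required is keeping the indexing consistent when writing down the inductive formula for $\mathbf{P}_i^t$. Once that formula is in place, the commutation of disjoint-coordinate shifts is essentially by inspection.
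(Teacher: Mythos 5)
Your proof is correct and takes essentially the same approach as the paper: both compute the action of $\mathbf{P}_i^{m_1}\mathbf{P}_j^{m_2}$ and $\mathbf{P}_j^{m_2}\mathbf{P}_i^{m_1}$ on an arbitrary vector entry-by-entry, observe that each acts by shifting the $i$th and $j$th digits of the base-$(n-k)$ index representation by $m_1$ and $m_2$ modulo $(n-k)$, and conclude commutativity because these shifts affect disjoint coordinates. Your version is slightly more explicit in isolating the inductive formula for $\mathbf{P}_i^t$, but the underlying argument is identical.
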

\begin{proof}
In order to show this, we show that $\mathbf{P}_{i}\mathbf{P}_{j}\mathbf{a} = \mathbf{P}_{j} \mathbf{P}_{i} \mathbf{a}$ for any $2^{k} \times 1$ dimensional column vector $\mathbf{a}$. Assuming without loss of generality that $i < j$, this can be seen by verifying that 
$$\mathbf{P}_{i}\mathbf{P}_{j} \mathbf{a} = \mathbf{P}_{j} \mathbf{P}_{i} \mathbf{a} = \left[\begin{array}{c} a(\langle \overbrace{0,0,\ldots,0}^{\mbox{$i-1$ entries}}, 0\oplus m_1, \overbrace{0,\ldots,0}^{j-i-1\mbox{ entries}},0 \oplus m_2,0,\ldots,0\rangle)\\ a(\langle 1,1,\ldots,1, 1\oplus m_1, 1,\ldots,1,1\oplus m_2,1,\ldots,1\rangle)\\ \vdots \\ a(\langle k-1,k-1,\ldots,k-1, (k-1)\oplus m_1, 0,\ldots,0,(k-1)\oplus m_2,k-1,\ldots,k-1\rangle)   \end{array}{c}\right]$$
In other words, the $<r_1,r_2,\ldots,r_k>$th element of both $\mathbf{P}_{i}^{m_1} \mathbf{P}_{j}^{m_2} \mathbf{a}$ and $\mathbf{P}_{j}^{m_2} \mathbf{P}_{i}^{m_1} \mathbf{a}$ can be verified to be $$a(<r_1,r_2,\ldots,r_{i-1},r_{i}\oplus m_1, r_{i+1},\ldots,r_{j-1},r_{j}\oplus m_2,r_{j+1},\ldots,r_k\rangle)$$
\end{proof}
Now, we proceed to show the \ref{property:MDS} property for $n-k \in \{2,3\}$. Without loss of generality, we assume that that $j_1, j_2, \ldots, j_k$ are in ascending order.
\subsubsection*{Case 1: $n-k=2$}
We divide this case into $2$ scenarios. In the first scenario , $j_1, j_2,\ldots, j_{k-1} \in \{1,2,\ldots,k\}$ and $j_{k} \in \{k+1, k+2\}$. Note that this corresponds to reconstructing the data from $k-1$ systematic nodes and a single parity node. Now, substituting this in equation (\ref{eq:MDSproof}) in Appendix \ref{app:MDS}, and expanding this determinant along the first $(k-1)\LL$ columns, we get this determinant to be equal to $|\mathbf{C}_{j_k,i}|$. Therefore, the desired property is equivalent to the matrix $\mathbf{C}_{j,i}=(\lambda_{i} \mathbf{P}_{i})^{j-k-1}$ to be full rank for all $j \in \{k+1,k+2,\ldots,n\}, i=1,2,\ldots,k$. This scenario is hence, trivial. 
Now, in the second scenario, consider the case where $j_1, j_2,\ldots, j_{k-2} \in \{1,2,\ldots,k\}$ and $j_{k-1} =k+1, j_{k}= k+2$. This corresponds to the case where the original sources are reconstructed using $k-2$ systematic nodes, and both parity nodes. By substituting in (\ref{eq:MDSproof}) and expanding along the first $(k-2) \LL$ rows, the MDS property can be shown to be equivalent showing that the matrix
$$ \left[\begin{array}{cc} 
\mathbf{I} & \mathbf{I}\\
\lambda_{i}\mathbf{P}_{i} &  \lambda_{j} \mathbf{P}_{j}
\end{array} \right]$$
having full rank. Now, note that the matrices $\mathbf{P}_{i}$ and $P_{j}$. On noting that the determinant of commuting block-matrices can be evaluated by using the element-wise determinant expansion over blocks \cite{Commuting_Matrices}, the determinant of the matrix can be written as $$|\lambda_{j}\mathbf{P}_{j}-\lambda_{i} \mathbf{P}_{i}| =  \lambda_{j}^{-1} |\mathbf{P}_{i}^{-1}| |\mathbf{P}_{j}\mathbf{P}_{i}^{-1}-\lambda_{i}\lambda_{j}^{-1}\mathbf{I}|.$$
Note that the above expression is equal to $0$ if and only if $\lambda_{i}\lambda_{j}^{-1}$ is an Eigen-value of the permutation matrix $\mathbf{P}_{j}\mathbf{P}_{i}^{-1}$. Note here that $\mathbf{P}_{j}\mathbf{P}_{i}^{-1}$ is a permutation matrix whose square is the identity matrix. Therefore, the only possible eigen values of this matrix are the square roots of unity, i.e., $1$ and $-1$. As noted in (\ref{eq:2parity}), we have $\lambda_{i} \neq \lambda_{j}, \lambda_{i}+\lambda_{j} \neq 0 \Rightarrow \lambda_{i}\lambda_{j}^{-1} \neq 1, \lambda_{i}\lambda_{j}^{-1} \neq -1$, and hence, the determinant shown above is non-zero and the matrix is full-rank as required.
\subsubsection*{Case 2: $n-k=2$}
We divide this case into $5$ scenarios as listed below.
\begin{enumerate}
\item $j_1, j_2,\ldots, j_{k-1} \in \{1,2,\ldots,k\}$ and $j_{k} \in \{k+1, k+2,k+3\}$.
\item $j_1, j_2,\ldots, j_{k-2} \in \{1,2,\ldots,k\}$ and $j_{k-1}= k+1, j_{k}=k+2\}$.
\item $j_1, j_2,\ldots, j_{k-2} \in \{1,2,\ldots,k\}$ and $j_{k-1}= k+2, j_{k}=k+3\}$.
\item $j_1, j_2,\ldots, j_{k-2} \in \{1,2,\ldots,k\}$ and $j_{k-1}= k+1, j_{k}=k+3\}$.
\item $j_1, j_2,\ldots, j_{k-3} \in \{1,2,\ldots,k\}$ and $j_{k-2}=k+1, j_{k-1}= k+2, j_{k}=k+3\}$.
\end{enumerate}
On noting that $\mathbf{P}_{i}\mathbf{P}_{j}^{-1}$ is a matrix whose third power (i.e., cube) is the identity matrix (i.e., it is a permutation that can be decomposed into cycles of length $3$), its eigen values of the cube roots of $1$. This means that in a finite field whose size is a prime (which is not equal to $3$), its only unique eigen value is $1$. Note that this means that Property \ref{property:MDS} can be proved to hold in the first two scenarios using arguments similar to Case 1. For the third scenario, again, using arguments similar to Case 1, showing the MDS property is equivalent to showing that the matrix    $$ \left[\begin{array}{cc} 
\lambda_{i}\mathbf{P}_{i} &  \lambda_{j} \mathbf{P}_{j}\\
\lambda_{i}^{2}\mathbf{P}_{i}^{2} & \lambda_{j}^{2} \mathbf{P}_{j}^{2}
\end{array} \right]$$
has a full rank. The above matrix has a full rank because is equal to 
$$
\left[\begin{array}{cc} 
\mathbf{I} & \mathbf{I}\\
\lambda_{i}\mathbf{P}_{i} &  \lambda_{j} \mathbf{P}_{j}
\end{array} \right] \times \left[\begin{array}{cc} \lambda_{i} \mathbf{P}_{i}&0 \\ 0 &\lambda_{j} \mathbf{P}_{j} \end{array}\right]
$$
and both the matrices of the above product have full rank.
Now, for the fourth scenario, we need to show that the matrix
$$
\left[\begin{array}{cc} 
\mathbf{I} & \mathbf{I}\\
\lambda_{i}^{2}\mathbf{P}_{i}^{2} & \lambda_{j}^{2} \mathbf{P}_{j}^{2}
\end{array} \right]$$
has a full rank. This can be seen on noting that the determinant of the above matrix evaluates to 
$$|\lambda_j^{2} \mathbf{P}_{j}^{2} - \lambda_{i}^{2} \mathbf{P}_{i}^{2}|=\lambda_{j}^{2} |\mathbf{P}_{i}^{-2}||\mathbf{P}_{j}^{2} \mathbf{P}_{i}^{-2}-\lambda_{i}^{2}\lambda_{j}^{-2}\mathbf{I}|$$
which is non-zero if $\lambda_{i}^{2}\lambda_{j}^{-2} \neq 1,$ again because $\mathbf{P}_{j}^{2}\mathbf{P}_{i}^{-2}$ is a matrix whose eigen values are the cube roots of unity. The conditions in (\ref{eq:3parity}) ensures that $\lambda_{i}^{2} \neq \lambda_{j}^{2}$.
Finally, we consider to scenario 5 where we need to show that all the information can be recovered from $k-3$ systematic nodes, and all $3$ parity nodes. For this, we need
$$\left[\begin{array}{ccc}
\mathbf{I} & \mathbf{I}&\mathbf{I}\\
\lambda_{i}\mathbf{P}_{i} &  \lambda_{j} \mathbf{P}_{j}& \lambda_{l} \mathbf{P}_{l}\\
\lambda_{i}^{2}\mathbf{P}_{i}^{2} & \lambda_{j}^{2} \mathbf{P}_{j}^{2}& \lambda_{l}^{2} \mathbf{P}_{l}^{2}
\end{array} \right]$$
to have full rank. Note that the above matrix has a block Vandermonde structure, where each of the blocks commute pairwise because of Lemma \ref{lemma:commute}. This fact, combined with the fact that commuting block matrices can be expanded in a manner, similar to the element-wise determinant expansion, implies that the determinant of the above matrix is equal to  
$$\prod_{i,j}|\lambda_{i}\mathbf{P}_{i}-\lambda_{j}\mathbf{P}_{j}|$$
The determinant is non-zero since $\lambda_{i} \neq \lambda_{j}$ if $i \neq j$. This completes the proof of the desired MDS property.

\bibliographystyle{ieeetr}
\bibliography{Thesis}

\end{document}